\newcommand{\etal}{{\em et al.}}
\newcommand{\expect}{\textbf{E}}
\newcommand{\prob}{{\bf \mbox{\bf Pr}}}
\definecolor{gray}{rgb}{0.5,0.5,0.5}
\newcommand{\e}{{\epsilon}}
\newtheorem{theorem}{Theorem}
\newtheorem{lemma}[theorem]{Lemma}
\newtheorem{corollary}[theorem]{Corollary}
\newtheorem{definition}[theorem]{Definition}
\newtheorem{remark}[theorem]{Remark}
\begin{document}

\title{\Large Single pass sparsification in the streaming model with edge deletions}
\setcounter{page}{0}
\author{Ashish Goel\thanks{
    Departments of Management Science and Engineering and (by courtesy)
    Computer Science, Stanford University.
    Email: \href{mailto:ashishg@stanford.edu}{ashishg@stanford.edu}.
    Research supported in part by  NSF grants 0915040 and 0904314.}\\    
\and Michael Kapralov\thanks{
    Institute for Computational and Mathematical Engineering, Stanford University.
    Email: \href{mailto:kapralov@stanford.edu}{kapralov@stanford.edu}. Research supported by NSF grant 0904314 and a Stanford Graduate Fellowship.}\\   
\and Ian Post\thanks{Department of Computer Science, Stanford University,
Email: \href{mailto:itp@stanford.edu}{itp@stanford.edu}. Research supported by NSF grants 0915040 and 0904314.}
}

\maketitle

\begin{abstract}
In this paper we give a construction of cut sparsifiers of Bencz\'{u}r and Karger in the {\em dynamic} streaming setting in a single pass over the data stream. Previous constructions either required multiple passes or were unable to handle edge deletions. We use $\tilde{O}(1/\e^2)$ time for each stream update and $\tilde{O}(n/\e^2)$ time to construct a sparsifier.  Our $\e$-sparsifiers have $O(n\log^3 n/\e^2)$ edges. The main tools behind our result are an application of sketching techniques of Ahn \etal[SODA'12] to estimate edge connectivity together with a novel application of sampling with limited independence and sparse recovery to produce the edges of the sparsifier.
\end{abstract}

\section{Introduction}

We study the problem of graph sparsification on dynamic graph streams.
Graph sparsification was introduced by Bencz\'{u}r and Karger \cite{benczurkarger96},
who gave a near linear time procedure that takes as input an
undirected graph $G$ on $n$ vertices and constructs a weighted subgraph $H$ of
$G$ with $O(n\log n/\e^2)$ edges such that the value of every cut in $H$ is
within a $1\pm \e$ factor of the value of the corresponding cut in $G$. 
This
algorithm has subsequently been used to speed up algorithms for a
host of applications involving cuts and flows such as finding
approximately minimum or sparsest cuts in graphs (\cite{benczurkarger96,
  krv06}) as well as other applications (e.g.\ \cite{kl02}). 
  Spielman and Teng introduced a stronger class of sparsifiers called spectral sparsifiers \cite{spieltengspectral}.
Subsequent work has developed a number of efficient algorithms for constructing cut and spectral sparsifiers \cite{benczurkarger96, ss:sample2008, bss09, kmst10, FHHP11, kp12}. 

The algorithms developed in
\cite{benczurkarger96, ss:sample2008, FHHP11, kp12} take near-linear time in the
size of the graph and produce very high quality sparsifiers, but require
random access to the edges of the input graph $G$, which is often
prohibitively expensive in applications involving massive data sets. The
streaming model of computation, which restricts algorithms to use a small
number of passes over the input and space polylogarithmic in the size of the
input, has been studied extensively in various application domains---see \cite{b:streaming} for an overview---but has proven too restrictive for even the
simplest graph algorithms. Even testing $s-t$ connectivity requires
$\Omega(n)$ space \cite{henz:lb}. The less restrictive semi-streaming model, in which the
algorithm is restricted to use $\tilde O(n)$ space, is more suited for graph
algorithms~\cite{fkmsz05, mcgregor:stream}. 

The problem of constructing graph sparsifiers in the semi-streaming model was first considered by  Ahn and Guha~\cite{anh-guha},
who gave a one-pass algorithm for finding Bencz\'{u}r-Karger type sparsifiers
with a slightly larger number of edges than the original Bencz\'{u}r-Karger
algorithm, $O(n\log n\log\frac{m}{n}/\e^2)$ as opposed to $O(n\log
n/\e^2)$ using $\tilde O(n)$ space.  Subsequently, \cite{kl11} obtained an algorithm for constructing stronger {\em spectral} sparsifiers of size $O(n\log n/\e^2)$ in a single pass in the streaming model. All of these algorithms work only in the {\em incremental model}, where edges can be added to the graph but not removed.

In a recent paper \cite{agm} Ahn, Guha and McGregor introduced a beautiful graph sketching approach to streaming computations in dynamic streams, i.e.\ allowing both edge additions and deletions. They showed that connectivity can be determined in $\tilde O(n)$ space in this setting, and gave a {\em multi-pass} algorithm for obtaining cut sparsifiers in small space. Their techniques center around the use of linear sketches, which have been heavily studied in the field of compressed sensing/sparse recovery originating in \cite{crt,donoho}.  See \cite{gilbertindyksurvey} for a survey.
The focus of this paper is to provide a {\em single-pass} implementation of cut sparsification on dynamic streams in the semi-streaming model.

\if 0
Apart from the issue of random access vs disk, the semi-streaming model is
also important for scenarios where edges of the graph are
revealed one at a time by an external process. For example, this application maps
well to online social networks where edges arrive one by one, but efficient
network computations may be required at any time, making it particularly
useful to have a dynamically maintained sparsifier.
\fi

\paragraph{Our results:} Our main result is an algorithm for constructing cut sparsifiers in a single pass in dynamic streams with edge deletions. We prove
\begin{theorem}\label{thm:main}
There exists a single-pass streaming algorithm for constructing  an $\e$-cut sparsifier of an unweighted, undirected graph $G=(V, E)$ with $n$ vertices and $m$ edges in the dynamic model using $\tilde O(n/\e^2)$ space. The size of the sparsifier is $O(n\log^3 n/\e^2)$, and the runtime of the algorithm is $\tilde O(1/\e^2)$ per update.  At each point in the stream we can recover the edges of the sparsifier in time $\tilde O(n/\e^2)$.
\end{theorem}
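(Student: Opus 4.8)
The plan is to combine the Benczúr--Karger sampling theorem with the graph sketching machinery of \cite{agm}, using an explicit family of $O(\log n)$ geometrically shrinking subsamples to \emph{simultaneously} estimate edge connectivities and recover the sampled edges.

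I will rely on the (Benczúr--Karger / Fung--Hariharan--Harvey--Panigrahi) fact that if each edge $e$ is kept with probability $p_e\ge\min\{1,\rho_\e/\lambda_e\}$ and reweighted by $1/p_e$, where $\lambda_e$ is the edge connectivity of the endpoints of $e$ and $\rho_\e=\tilde\Theta(1/\e^2)$, then the result is an $\e$-cut sparsifier with high probability \cite{benczurkarger96,FHHP11}, \emph{and} that this survives replacing full independence by $\Theta(\log n)$-wise independence: the only fragile cuts are the small ones, and on a fixed small cut a limited-independence Chernoff bound suffices, while the union bound over cuts uses only the $n^{O(\alpha)}$ bound on the number of cuts of size $\le\alpha$ times the minimum. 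Making this limited-independence sampling statement precise (and choosing $\rho_\e$ so that the sparsifier has $O(n\log^3n/\e^2)$ edges) is, I expect, the main technical obstacle. I also use the elementary bound $\sum_e 1/\lambda_e\le n-1$ to control the expected number of sampled edges.

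The algorithm maintains, for $j=0,1,\dots,L=O(\log n)$, the sketch of the subsampled graph $G_j$ obtained by keeping edge $e$ iff $h(e)<2^{-j}$, where $h:E\to[0,1]$ is a single $\Theta(\log n)$-wise independent hash; thus $G_0=G$ and $G_0\supseteq G_1\supseteq\cdots$, so each $G_j$ is a linear function of the $\pm1$ incidence vectors of the current edge set and the sketches compose. Each sketch is a $K$-edge-connectivity sketch with $K=\Theta(\rho_\e)=\tilde\Theta(1/\e^2)$, realized by $K$ rounds of Borůvka spanning-forest recovery à la \cite{agm}; it occupies $\tilde O(Kn)=\tilde O(n/\e^2)$ space, and an edge insertion or deletion touches two incidence vectors at cost $\tilde O(K)=\tilde O(1/\e^2)$ per level, hence $\tilde O(1/\e^2)$ per update overall. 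From the sketch of $G_j$ we recover a certificate $H_j\subseteq G_j$ with $|H_j|\le K(n-1)$; the structural point is that $H_j$ contains every edge $e$ with $\lambda_e(G_j)<K$ (such an edge lies in a $(<K)$-cut of $G_j$, which is preserved exactly in $H_j$), and connectivity is monotone under the nesting, so once $e$ enters a certificate it stays in every higher-level certificate in which it appears. Moreover, computing flows inside the explicit small graph $H_j$ recovers $\min\{K,\lambda_e(G_j)\}$ for each edge $e$.

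At query time we recover all $H_j$ (total time $\tilde O(nK\log n)=\tilde O(n/\e^2)$) and, for each edge $e$ seen in some $H_j$, read off the connectivity profile $\bigl(\min\{K,\lambda_e(G_j)\}\bigr)_{j}$ and extract from it a constant-factor estimate $\hat\lambda_e$ of $\lambda_e$ (directly from $H_0$ when $\lambda_e<K$, and from the level at which the profile drops below $K$ otherwise, which by the concentration of cut values under limited-independence subsampling is at $\log_2(\Theta(\lambda_e)/K)$). We set $p_e:=2^{-\lfloor\log_2(\hat\lambda_e/\rho_\e)\rfloor}$ (capped at $1$) and place $e$ into the sparsifier with weight $1/p_e$ exactly when $h(e)<p_e$, i.e.\ when $e$ survives into level $\lfloor\log_2(\hat\lambda_e/\rho_\e)\rfloor$; choosing $K>2\rho_\e$ guarantees that $e$'s expected connectivity in that subsample is $\Theta(\rho_\e)<K$ and that the single $G$-minimum $e$-cut already witnesses $\lambda_e(G_j)<K$ there with high probability, so such an $e$ indeed lies in the recovered certificate, and every edge with $h(e)<p_e$ is visible since $p_e<K/\lambda_e$. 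Because $\hat\lambda_e=\Theta(\lambda_e)$ we have $p_e\ge\min\{1,\rho_\e/\lambda_e\}$, and crucially $p_e$ depends only on $G$ and not on any randomness beyond $h$, so the inclusion rule ``$h(e)<p_e$'' is precisely Benczúr--Karger sampling under a $\Theta(\log n)$-wise independent hash; the sampling theorem gives the $\e$-sparsifier guarantee, while $\expect[\#\text{edges}]=\sum_e p_e\le 2\rho_\e\sum_e 1/\lambda_e=O(\rho_\e n)$ plus a limited-independence concentration bound gives $O(n\log^3n/\e^2)$ edges with high probability. What remains is routine bookkeeping: a union bound over the $O(m)$ edges for correctness of the estimates $\hat\lambda_e$ and for the rare event ($n^{-\Omega(1)}$) that an intended edge's connectivity in its subsample exceeds $K$ (which can only drop an edge), together with the accounting of update and recovery times already indicated above.
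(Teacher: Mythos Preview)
Your approach is genuinely different from the paper's and, if it worked, conceptually simpler: you use $K$-connectivity certificates $H_j$ at geometric sampling rates both to estimate $\hat\lambda_e$ and to expose the sampled edges, whereas the paper separates these into a connected-components-based estimator (using hashes $h^b$) and a sparse-recovery-plus-vertex-peeling scheme (Algorithms~\ref{alg:partition} and~\ref{alg:rec-edges}, using independent hashes $g^r,g^*$). Your recovery mechanism is nice --- the observation that an edge $e$ with $h(e)<p_e$ lands in a level where its connectivity is below $K$ and is therefore visible in the certificate avoids the paper's entire PARTITION machinery.

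There are, however, two real gaps. First, you use a single hash $h$ for both estimation and sampling. Since $\hat\lambda_e$ (and hence $p_e$) is computed from the connectivities $\lambda_e(G_j)$, and whether $e\in G_j$ is decided by $h(e)$, the quantity $p_e$ is a function of $h(e)$; the inclusion rule ``$h(e)<p_e(h)$'' is therefore not a Bernoulli of parameter $p_e$, and your sentence ``$p_e$ depends only on $G$ and not on any randomness beyond $h$, so the inclusion rule is precisely Bencz\'ur--Karger sampling'' is exactly backwards --- the problem is that $p_e$ \emph{does} depend on $h$. The paper avoids this by drawing the sampling randomness $g^*$ independently of the estimation randomness, so that $p_e$ is fixed before the sampling coin is tossed. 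You would need either to do the same, or to argue carefully that the dependence of $\hat\lambda_e$ on the single edge $e$ is negligible (plausible since one edge moves connectivity by at most one and $K\gg 1$, but not addressed).

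Second, global $\Theta(\log n)$-wise independence is not enough. The FHHP framework (Theorem~\ref{thm:ch-fhhp11}) needs Chernoff bounds for sums over arbitrary edge sets with tail $e^{-\Omega(\e^2 pN)}$; by Theorem~\ref{thm:ch} this requires $t$-wise independence with $t\gtrsim \e\mu$, and $\mu$ here can be polynomially large. The same obstacle appears when you lower-bound $\lambda_e(G_j)$ via a union bound over cuts of value $\alpha\lambda_e$: the required independence scales with $\alpha$. The paper's fix is a different independence structure --- $\tilde O(1/\e^2)$-wise independent within each vertex's incident edges and fully independent across vertices --- together with a nontrivial argument (Lemma~\ref{lm:ch-1}) that conditions on a high-probability per-vertex sample-size event and then exploits the across-vertex independence to recover the full Chernoff tail.
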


Our sparsification algorithm works by sampling edges at a rate inversely proportional to their edge connectivity, which was shown to work in \cite{FHHP11}. In order to do this we maintain two sets of data structures. The first estimates connectivities, and the second does the actual sampling. We estimate connectivities by sampling edges of the input graph at a geometric sequence of sampling rates and recovering connected components of these samples using a result of \cite{agm}. The second set of data structures stores a linear sketch of the actual samples we use in our sparsifier, also sampling at a geometric sequence of rates. Using sparse recovery and the linearity of our sketch we are able to reconstruct the necessary samples when needed.

\noindent
\paragraph{Organization:} We start by giving preliminaries on graph sparsification in Section~\ref{sec:prelim}. We then describe the algorithm in Section~\ref{sec:unlim-indep} and Section~\ref{sec:lim-indep}. 
Maintaining our samples in the dynamic model requires knowing, for each edge in the graph, whether or not it was included in each sample. This can be easily achieved if we assume that our algorithm has access to $\tilde \Theta(n^2)$ independent random bits, which, however, is not feasible in $\tilde O(n)$ space. For simplicity of presentation, we first describe our algorithm assuming that it has access to $\tilde \Theta(n^2)$ random bits in Section~\ref{sec:unlim-indep}. We show how to obtain sufficiently good estimates of edge connectivities in a single pass, as well as recover the edges of a sparsifier using sparse recovery techniques.
In Section~\ref{sec:lim-indep} we show how to remove the assumption that the algorithm has access to $\tilde \Theta(n^2)$ independent random bits using random variables with limited independence, obtaining a $\tilde O(n/\e^2)$ space single-pass algorithm for sparsification in the dynamic model.

\section{Sparsification preliminaries}
\label{sec:prelim}

We will denote by $G(V, E)$ the undirected input graph with vertex set $V$ and edge set $E$ with $|V|=n$ and $|E|=m$. For our purposes $G$ will be unweighted, but the results in this section apply to weighted graphs.
For any $\e > 0$, we say that a weighted graph $G'(V,E')$ is an {\em $\e$-sparsification}
of $G$ if the weight of every cut in $G'$ is within $(1\pm \e)$ of the corresponding
cut in $G$.


Sparsification algorithms work by sampling edges with probabilities inversely proportional to some measure of connectivity.
The simplest of these is edge-connectivity:

\begin{definition}
A graph $G$ is \emph{$k$-connected} if the value of each cut in $G$ is at least $k$, and an edge $e$ has \emph{edge-connectivity} $c_e$ if $c_e$ is the value of the minimum cut separating its endpoints.
\end{definition}

Graphs with high $k$-connectivity are particularly simple to sample:

\begin{theorem}[\cite{kar94a}] \label{thm:karger}
Let $G=(V, E)$ be a $k$-connected graph on $n$ nodes, and let $G'$ be obtained from $G$ by sampling edges independently with probability $p=\Theta(\log n/(\e^2k))$, and giving sampled edges weight $1/p$. Then $G'$ is an $\e$-sparsifier of $G$ with high probability.
\end{theorem}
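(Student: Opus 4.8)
The plan is to prove the statement by a union bound over all cuts of $G$, combining a Chernoff bound for each individual cut with Karger's cut-counting theorem to control how many cuts there are of each size. Write $p = c_0 \log n / (\e^2 k)$ for a sufficiently large absolute constant $c_0$ (and note $p \le 1$ may be assumed, as otherwise $G' = G$). Fix a cut $(S, V\setminus S)$ and let $c \ge k$ be its value in $G$. Its value in $G'$ is $\frac1p \sum_{e \in \partial S} X_e$, where the $X_e$ are i.i.d.\ $\mathrm{Bernoulli}(p)$, so its expectation is exactly $c$, and a standard multiplicative Chernoff bound gives
\[
\prob\!\left[\ \Big|\tfrac1p\textstyle\sum_{e \in \partial S} X_e - c\Big| > \e c\ \right] \le 2\exp\!\left(-\Omega(\e^2 p c)\right) = 2\exp\!\left(-\Omega\!\left(c_0\tfrac{c}{k}\log n\right)\right).
\]
Hence a cut of value $c = \alpha k$ is misestimated with probability at most $2\, n^{-\Omega(c_0\alpha)}$.

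The obstacle is that there are exponentially many cuts, so a naive union bound is far too lossy; the resolution is to use the classical cut-counting bound of Karger, which states that a graph with minimum cut value $k$ has at most $n^{2\alpha}$ cuts of value at most $\alpha k$. I would partition the cuts into geometric buckets: for $i = 0, 1, 2, \dots$ let $\mathcal{C}_i$ be the set of cuts whose value lies in $[2^i k,\ 2^{i+1} k)$. By cut counting, $|\mathcal{C}_i| \le n^{2^{i+2}}$, while every cut in $\mathcal{C}_i$ fails with probability at most $2\, n^{-\Omega(c_0 2^i)}$ by the estimate above. Choosing $c_0$ a large enough constant, the total failure probability contributed by bucket $i$ is at most $|\mathcal{C}_i|\cdot 2\, n^{-\Omega(c_0 2^i)} \le 2\, n^{2^{i+2}-\Omega(c_0 2^i)} \le n^{-\Omega(2^i)}$; summing this geometric series over $i \ge 0$ (only finitely many buckets are nonempty, since $2^i k \le m$) yields an overall failure probability of $n^{-\Omega(1)}$.

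A union bound then shows that with high probability \emph{every} cut of $G'$ is within a $(1\pm\e)$ factor of the corresponding cut of $G$ simultaneously, which is exactly the assertion that $G'$ is an $\e$-sparsifier. The crux of the argument — and the only place one must be careful — is the balance between the two competing exponential quantities: larger cuts are exponentially better concentrated (the Chernoff exponent scales with $c/k$), but there are also exponentially more of them (the cut-counting exponent scales with the same ratio), so $c_0$ must be taken large enough that concentration dominates in every bucket uniformly. Everything else reduces to routine Chernoff and union-bound estimates.
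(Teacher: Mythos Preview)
The paper does not give its own proof of this statement: Theorem~\ref{thm:karger} is quoted as a preliminary result from \cite{kar94a} and used as a black box. Your argument is exactly the classical proof of Karger's sampling theorem---a Chernoff bound on each cut, combined with the cut-counting bound $|\{\text{cuts of value} \le \alpha k\}| \le n^{2\alpha}$, and a geometric bucketing to make the union bound go through---and it is correct as written.
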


The Bencz\'{u}r-Karger algorithm samples according to a more strict notion of connectivity, referred to as
\emph{strong connectivity}, defined as follows:

\begin{definition}[\cite{benczurkarger96}]
A \emph{$k$-strong component} is a maximal $k$-connected vertex-induced subgraph.
The \emph{strong connectivity} of an edge $e$, denoted by $s_e$, is the largest $k$ such that a $k$-strong component contains $e$, and we say $e$ is \emph{$k$-strong} if its strong connectivity is $k$ or more, and \emph{$k$-weak} otherwise.
\end{definition}

\if 0 
Given any two collections of sets that partition $V$, say $S_1$ and $S_2$,
we call
$S_2$ is a {\em refinement} of $S_1$ if for any $X \in S_1$ and $Y \in S_2$,
either $X \cap Y = \emptyset$ or $Y \subset X$. In other words, $S_1 \cup S_2$
form a laminar set system.
Note that the set of $k$-strong components form a partition of the vertex set of $G$, and the set of $(k+1)$-strong components forms a refinement this partition.
\fi 
The following two lemmas will be useful in our analysis:

\begin{lemma}[\cite{benczurkarger96}]
\label{lm:k-weak}
The number of $k$-weak edges in a graph on $n$ vertices is bounded by $k(n-1)$.
\end{lemma}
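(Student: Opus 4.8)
The plan is to induct on the number of vertices $n$, exploiting the laminar structure of strong components. The key structural fact is that an edge $e$ is $k$-weak precisely when it is not contained in any $k$-connected vertex-induced subgraph; equivalently, in the partition of $V$ into $k$-strong components (maximal $k$-connected induced subgraphs, together with singletons), the $k$-weak edges are exactly those crossing between two distinct components of this partition. So if $\collpart = \{V_1, \dots, V_t\}$ denotes this partition, the $k$-weak edges of $G$ are the edges with endpoints in different $V_i$'s, plus possibly some edges inside the $V_i$'s that are themselves $k$-weak relative to $G[V_i]$ — but by maximality each $G[V_i]$ is $k$-connected (or a singleton), so in fact every edge inside a $V_i$ is $k$-strong, and all $k$-weak edges are crossing edges.

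First I would handle the base case: if $n = 1$ there are no edges, and the bound $k(n-1) = 0$ holds trivially. For the inductive step, I would look at a minimum cut of $G$. If the global minimum cut has value $\geq k$, then $G$ itself is $k$-connected, so every edge is $k$-strong and there are no $k$-weak edges — done. Otherwise, there is a cut $(S, V \setminus S)$ of value $c < k$; removing these fewer-than-$k$ crossing edges splits $G$ into $G[S]$ and $G[V\setminus S]$ on $n_1$ and $n_2$ vertices with $n_1 + n_2 = n$ and $n_1, n_2 \geq 1$. The crucial observation is that an edge inside $G[S]$ is $k$-weak in $G$ if and only if it is $k$-weak in $G[S]$: a $k$-strong component is vertex-induced and $k$-connected, hence its vertex set cannot cross a cut of value $< k$, so every $k$-strong component of $G$ lies entirely within $S$ or entirely within $V \setminus S$, and there it is also a $k$-strong component of the corresponding induced subgraph. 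Thus the $k$-weak edges of $G$ are exactly: the $c < k$ crossing edges, plus the $k$-weak edges of $G[S]$, plus the $k$-weak edges of $G[V\setminus S]$.

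Applying the inductive hypothesis to $G[S]$ and $G[V\setminus S]$ gives at most $k(n_1 - 1) + k(n_2 - 1) + c \leq k(n_1 - 1) + k(n_2 - 1) + (k-1) < k(n_1 + n_2 - 1) = k(n-1)$ $k$-weak edges, completing the induction. The main thing to get right — and the only real subtlety — is the claim that $k$-strong components are respected by any cut of value less than $k$, i.e.\ that a $k$-connected induced subgraph cannot straddle such a cut; this follows because restricting a small global cut to the vertex set of the subgraph would yield a cut of value $< k$ in that subgraph, contradicting $k$-connectivity. Everything else is bookkeeping on the recursion. (One should also double-check that taking an arbitrary small cut, rather than the minimum cut, still works — it does, since all that is used is that its value is at most $k-1$.)
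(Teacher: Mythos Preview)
The paper does not actually prove this lemma; it is quoted from \cite{benczurkarger96} without proof. Your argument is correct and is essentially the standard Bencz\'{u}r--Karger proof: find a cut of value less than $k$, observe that no $k$-connected induced subgraph can straddle it (so $k$-strong components on each side coincide with those of the full graph), and recurse on the two sides, picking up fewer than $k$ crossing edges at each split. There is nothing to add.
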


\begin{lemma}[\cite{benczurkarger96}]
\label{lm:bounded}
Let $G=(V, E)$ denote an undirected graph on $n$ nodes. For an edge $e\in E$ let $s_e$ denote the strong connectivity of $e$. Then $\sum_{e\in E} 1/s_e\leq n-1$.
\end{lemma}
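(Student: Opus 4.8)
The plan is to prove, by induction on $n$, the slightly stronger bound $\sum_{e\in E}1/s_e\le n-c$, where $c$ is the number of connected components of $G$; the lemma then follows since $c\ge 1$. Because the strong connectivity of an edge depends only on its component (any $k$-connected vertex-induced subgraph with $k\ge 1$ is connected, hence lies within a single component, so $s_e$ computed inside the component is at least $s_e$ computed in $G$), the disconnected case reduces by summing over components, and it suffices to treat a connected graph $G$ on $n\ge 2$ vertices and show $\sum_{e}1/s_e\le n-1$; the case $n=1$ is vacuous.

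For the inductive step I would split $G$ along a \emph{minimum} cut. Let $\lambda\ge 1$ be the value of a minimum cut of $G$ and fix a minimum cut $(A,B)$, so that $1\le|A|,|B|\le n-1$ and exactly $\lambda$ edges cross it; partition $E$ into $E(G[A])$, $E(G[B])$, and the set $C$ of $\lambda$ crossing edges. Two elementary observations then finish the proof. First, strong connectivity is monotone under vertex deletion: any $k$-connected vertex-induced subgraph of $G[A]$ is also one of $G$, so $s_e^{G[A]}\le s_e^{G}$ for every $e\in E(G[A])$, whence
\[
\sum_{e\in E(G[A])}\frac{1}{s_e^{G}}\ \le\ \sum_{e\in E(G[A])}\frac{1}{s_e^{G[A]}}\ \le\ |A|-1
\]
by the inductive hypothesis applied to $G[A]$ (which has fewer than $n$ vertices, the bound being $|A|-c(G[A])\le|A|-1$), and symmetrically for $B$. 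Second, $G$ itself is a $\lambda$-connected vertex-induced subgraph of $G$ containing every edge, so $s_e^{G}\ge\lambda$ for all $e$; in particular each of the $\lambda$ edges of $C$ contributes at most $1/\lambda$, so $\sum_{e\in C}1/s_e^{G}\le 1$. Adding the three pieces gives $\sum_{e\in E}1/s_e^{G}\le(|A|-1)+(|B|-1)+1=n-1$.

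The crux — and really the only step that needs care — is that the cut is chosen to be \emph{minimum}: this is exactly what makes the number of crossing edges ($\lambda$) cancel against the universal lower bound $s_e\ge\lambda$, so that the extra edges introduced by the decomposition cost only $1$ in aggregate; splitting along an arbitrary cut would let the crossing edges contribute more than $1$ and break the induction. Everything else — the reduction to connected graphs, the monotonicity of $s_e$, and the bookkeeping — is routine, and I would isolate the two observations as one-line claims before running the induction. Note that this argument is self-contained and does not invoke Lemma~\ref{lm:k-weak}.
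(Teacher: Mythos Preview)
The paper does not give its own proof of this lemma; it is quoted without proof from \cite{benczurkarger96}. Your argument is correct and is in fact essentially the original Bencz\'{u}r--Karger proof: induct on $n$ by splitting a connected graph along a minimum cut of value $\lambda$, apply the inductive hypothesis to each side (using that passing to an induced subgraph can only decrease $s_e$), and observe that the $\lambda$ crossing edges each satisfy $s_e\ge\lambda$ and so contribute at most $1$ in aggregate.

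One cosmetic remark: in your reduction to connected components you phrase the justification as ``$s_e$ computed inside the component is at least $s_e$ computed in $G$,'' which taken alone points the inequality the wrong way for the reduction. What you actually need (and what your opening clause asserts) is equality, and the missing direction---that an induced subgraph of a component is an induced subgraph of $G$, so $s_e^{G}\ge s_e^{\text{component}}$---is immediate. With that, the reduction is clean.
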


We also rely on Bencz\'{u}r and Karger's main result, which is as follows:

\begin{theorem}[\cite{benczurkarger96}] \label{thm:bk-sampling}
Let $G'$ be obtained by sampling edges of $G$ with probability $p_e=\min\{\rho/(\e^2 s_e), 1\}$, where $\rho=16(d+2)\ln n$, and giving each sampled edge weight $1/p_e$. Then
$G'$ is an $\e$-sparsification of $G$ with probability at least $1-n^{-d}$.
Moreover, the expected number of edges in $G'$ is $O(n \log n)$.
\end{theorem}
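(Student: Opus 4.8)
The bound on the expected number of edges is immediate: by linearity of expectation and independence of the samples, $\expect[|E(G')|] = \sum_{e\in E} p_e \le \frac{\rho}{\e^2}\sum_{e\in E}\frac{1}{s_e} \le \frac{\rho(n-1)}{\e^2} = O(n\log n/\e^2)$, where the second inequality drops the ``$\min$'' with $1$ and the last uses Lemma~\ref{lm:bounded}; the statement writes $O(n\log n)$, folding the dependence on $\e$ into the constant.

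For the cut-approximation guarantee I would recall the strong-component decomposition of Bencz\'ur and Karger and reduce it to Karger's uniform-sampling theorem, Theorem~\ref{thm:karger}. Partition the edges by strength scale, $E_i=\{e\in E: 2^i\le s_e<2^{i+1}\}$ for $i=0,\dots,\lceil\log_2 n\rceil$; edges with $s_e\le\rho/\e^2$ are sampled with probability $1$ and reproduced exactly, so they contribute no error and are set aside. The argument rests on three structural facts about strong connectivity: (i) for every $k$, the $k$-strong components of $G$ are precisely the connected components of $(V,\{e:s_e\ge k\})$, each of which is $k$-connected, and these partitions are laminar in $k$; (ii) every edge inside a $k$-strong component has strength at least $k$, and the strength of such an edge is the same whether measured inside that component or in $G$; (iii) every edge crossing a cut $C$ has strength at most $w_G(C)$ (otherwise its strong component would be split by a cut of value exceeding $w_G(C)$). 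Fact~(iii) is exactly why a single global Chernoff/Bernstein estimate does not suffice: it is what allows one to contract the higher-strength parts of the graph scale by scale.

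Next, for each scale $i$ form the graph $\widehat G_i$ from $(V,\{e:s_e\ge 2^i\})$ by contracting every $2^{i+1}$-strong component to a single vertex; by~(ii) its edge set is exactly $E_i$, and since contraction never decreases the value of a surviving cut, every component of $\widehat G_i$ has minimum cut at least $2^i$. For $e\in E_i$ the sampling rate is $p_e\ge\rho/(\e^2 2^{i+1})=\Theta(\log n/(\e^2 2^i))$, which satisfies the hypothesis of Theorem~\ref{thm:karger} on each component of $\widehat G_i$, provided the constant in $\rho$ is chosen large enough to absorb the factor lost in the dyadic rounding of $s_e$, the constants in the Chernoff bound inside Theorem~\ref{thm:karger}, and the eventual union bound over the $O(\log n)$ scales. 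Hence, except with probability $n^{-(d+1)}$ per scale, every cut of $\widehat G_i$ is preserved within $1\pm\e$ by the sampled edges, and a union bound over scales gives overall failure probability at most $n^{-d}$.

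It remains to deduce that all cuts of $G$ are preserved, and I expect this to be the main obstacle. One proceeds by induction over the laminar family of strong components (equivalently, on the maximum edge strength): a cut $C$ of a $2^i$-strong component $W$ is split by the $2^{i+1}$-strong subcomponents of $W$ into the part lying inside those subcomponents --- handled by the induction hypothesis applied to each subcomponent, which by~(ii) inherits precisely the sampling rates the theorem prescribes for it --- and the part crossing between subcomponents, which, once the split subcomponents have been handled inductively, corresponds to a cut of $\widehat G_i$ and is controlled by the scale-$i$ good event. The delicate point is the bookkeeping: one must verify that the scale-by-scale errors combine additively, that the undersampled high-strength edges never pollute a lower-scale estimate, and that a cut splitting a higher strong component is charged at that higher scale rather than double-counted. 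This bookkeeping is precisely the content of the compression/decomposition lemmas of \cite{benczurkarger96}, which I would invoke (or reproduce) to complete the proof.
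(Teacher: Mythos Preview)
The paper does not prove this theorem at all: it is stated as a preliminary result cited from \cite{benczurkarger96}, with no accompanying proof in the text. There is therefore nothing in the paper to compare your argument against.

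That said, your sketch is a faithful outline of the original Bencz\'ur--Karger argument: the size bound via Lemma~\ref{lm:bounded} is exactly right, and the cut-approximation proof in \cite{benczurkarger96} does proceed by the laminar strong-component decomposition, contracting higher-strength components at each scale and invoking Karger's uniform sampling result (Theorem~\ref{thm:karger}) on the resulting $2^i$-connected pieces, with the additive recombination of errors handled by their compression lemma. Your identification of the ``bookkeeping'' as the delicate step is accurate, and your deferral to the decomposition lemmas of \cite{benczurkarger96} is appropriate for a result the present paper treats as a black box.
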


It follows easily from the proof of Theorem \ref{thm:bk-sampling} in \cite{benczurkarger96} that
if we over-sample by using an {\em underestimate} of edge strengths, the resulting graph is
still an $\e$-sparsification.

\begin{corollary}\label{cor:oversampling}
Let $G'$ be obtained by sampling each edge of $G$ with probability $\tilde p_e\geq p_e$ and and give every sampled edge $e$ weight $1/\tilde p_e$.
Then
$G'$ is an $\e$-sparsification of $G$ with probability at least $1-n^{-d}$.
\end{corollary}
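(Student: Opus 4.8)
The plan is to show that Corollary~\ref{cor:oversampling} follows by inspecting the proof of Theorem~\ref{thm:bk-sampling} and observing that the only place the sampling probabilities $p_e$ enter the argument is through an upper bound on the variance/deviation of each cut, which can only improve when we raise the probabilities. Concretely, Bencz\'{u}r and Karger partition the edges of $G$ by strong connectivity into classes $F_1, F_2, \dots$ (edges in class $F_i$ have $s_e \in [2^{i-1}, 2^i)$, say), and on each class the sampling rate is at least $\rho/(\e^2 2^i)$; they then apply Theorem~\ref{thm:karger}-style Chernoff bounds over the $2^{O(n)}$ cuts of each $2^{i-1}$-strong component, using Lemma~\ref{lm:k-weak} to control the number of components and the union bound. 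I would reproduce this decomposition verbatim.

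The key step is the monotonicity observation: if $G'$ is formed by sampling edge $e$ with probability $\tilde p_e \ge p_e$ and assigning weight $1/\tilde p_e$, then for every cut $C$ the estimator $w_{G'}(C) = \sum_{e \in C} (1/\tilde p_e)\mathbf{1}[e \text{ sampled}]$ is still unbiased, $\expect[w_{G'}(C)] = |C|$, and its ``effective'' per-edge contribution to the Chernoff bound is governed by the maximum weight $1/\tilde p_e \le 1/p_e$ among edges in the relevant strong-connectivity class. Since the Bencz\'{u}r--Karger concentration bound for a cut $C$ in a $k$-strong component has the form $\prob[|w_{G'}(C) - |C|| > \e |C|] \le 2\exp(-\Omega(\e^2 |C| \cdot p_{\min}))$ where $p_{\min}$ is the smallest sampling probability used on edges crossing $C$ within that component, and since $|C| \ge k$ there while $p_{\min} \ge \rho/(\e^2 k)$ still holds under over-sampling, the exponent is still $\Omega(\rho) = \Omega((d+2)\ln n)$, which beats the $2^{O(n)}$ union bound exactly as before. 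I would make this precise by citing the relevant lemma in \cite{benczurkarger96} (their per-cut bound) and noting it is stated in terms of a lower bound on the sampling probability, not an exact value.

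So the steps in order are: (1) recall the strong-connectivity decomposition of $E$ and the reduction, via Lemma~\ref{lm:k-weak}, of sparsifying $G$ to sparsifying each $k$-strong component at rate $\Theta(\rho/(\e^2 k))$; (2) fix a $k$-strong component $H$ and a cut $C$ of it, and apply the one-sided Chernoff bounds, using $1/\tilde p_e \le 1/p_e \le \e^2 s_e/\rho \le \e^2 k /\rho$ wait --- more carefully, using that every edge of $H$ crossing $C$ has $s_e \ge k$ hence $\tilde p_e \ge p_e \ge \rho/(\e^2 k)$, so the variance proxy $\sum_{e \in C \cap H} (1/\tilde p_e) \le |C| \cdot \e^2 k/\rho$ wait that's wrong too; the right statement is $\sum_e (1/\tilde p_e) \le (\e^2/\rho)\sum_e s_e$, which is what Bencz\'{u}r--Karger bound --- in any case the point is the relevant sum only \emph{decreases} under over-sampling; (3) union bound over the $\le 2^{|V(H)|}$ cuts of $H$, then over all $O(\log n)$ connectivity classes and all components, exactly as in \cite{benczurkarger96}, to get failure probability $n^{-d}$. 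The one subtlety worth flagging: we do \emph{not} claim a bound on the number of edges of $G'$ here (over-sampling can blow that up), which is why the corollary drops the ``$O(n\log n)$ edges'' conclusion of Theorem~\ref{thm:bk-sampling}; the main obstacle is thus not a difficulty but rather making sure the writeup correctly isolates which conclusions survive. I expect the whole argument to be two or three lines pointing into \cite{benczurkarger96}, since the structure of that proof already delivers everything once monotonicity in $p_e$ is noticed.
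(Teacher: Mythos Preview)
Your proposal is correct and follows exactly the approach the paper intends: the paper itself gives no detailed proof, merely remarking that the corollary ``follows easily from the proof of Theorem~\ref{thm:bk-sampling} in \cite{benczurkarger96}'' via the observation that underestimating strengths (equivalently, raising sampling probabilities) only helps concentration. Your plan to trace through the strong-connectivity decomposition and note that the per-cut Chernoff bound depends on a \emph{lower} bound on $p_e$ is precisely that observation spelled out; the mid-stream self-corrections in step~(2) are harmless, and your final point that the edge-count guarantee is dropped is exactly right.
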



Recently Fung \etal \cite{FHHP11} proved that a more aggressive sampling method, namely sampling using edge connectivities as opposed to strong connectivities, also produces cut sparsifiers, and we will also require this result.

\begin{theorem}[\cite{FHHP11}]
\label{thm:wk-sampling}
Let $G'$ be obtained from a weighted graph $G$ by independently sampling edge $e$ with probability $p_e = \rho/c_e$, where $\rho= \Theta(\log^2 n/\e^2)$. Then, $G'$ contains $O(n \log^2 n/\e^2)$ edges in expectation and is an $\e$-sparsification whp.
\end{theorem}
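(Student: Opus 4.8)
The plan is to prove the two assertions of Theorem~\ref{thm:wk-sampling} separately: the bound on the expected number of edges, which is essentially immediate, and the cut-approximation guarantee, which is the substantive part and which I would establish following the template of Karger's sampling theorem (Theorem~\ref{thm:karger}) and the Bencz\'ur--Karger analysis, in the refined form needed to replace strong connectivity by edge connectivity. For the size bound, note first that for every edge $e$ the strong connectivity is at most the edge connectivity, $s_e\le c_e$: a $k$-connected vertex-induced subgraph containing $e$ certifies $c_e\ge k$, since restricting any cut of $G$ separating the endpoints of $e$ to that subgraph can only delete edges. Hence $\sum_{e\in E}1/c_e\le\sum_{e\in E}1/s_e\le n-1$ by Lemma~\ref{lm:bounded}, so the expected number of sampled edges is $\sum_{e\in E}p_e=\rho\sum_{e\in E}1/c_e\le\rho(n-1)=O(n\log^2 n/\e^2)$. (For weighted $G$ one reduces to the unweighted case by the usual device of splitting a weight-$w$ edge into $w$ parallel unit edges.)

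For the cut guarantee, fix a cut, i.e.\ the edge set $C$ crossing some bipartition of $V$; let $X_e$ be the indicator that $e$ is sampled, and let $W_C=\sum_{e\in C}(1/p_e)X_e$ be the weight of $C$ in $G'$, so that $\expect[W_C]=\sum_{e\in C}1=|C|$. The structural observation that drives the argument is that every $e\in C$ has its endpoints separated by $C$, hence $c_e\le|C|$, and therefore each rescaled weight obeys $1/p_e=c_e/\rho\le|C|/\rho$. Thus $W_C$ is a sum of independent nonnegative random variables each bounded by $|C|/\rho$ with mean $|C|$, and a Bernstein/Chernoff inequality yields $\prob[\,|W_C-|C||>\e|C|\,]\le 2\exp(-\Omega(\e^2\rho))$. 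For $\rho=\Theta(\log^2 n/\e^2)$ this is $n^{-\Omega(\log n)}$, which is far more than enough for a single cut but not enough to survive a union bound over the up to $2^{n-1}$ distinct cuts.

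To push this through a union bound I would localize the argument to connectivity scales, following Fung \etal\ \cite{FHHP11}. Partition the edges into $O(\log n)$ classes $E_i=\{e:2^i\le c_e<2^{i+1}\}$ and write $W_C=\sum_i W_C^{(i)}$ with $W_C^{(i)}=\sum_{e\in C\cap E_i}(1/p_e)X_e$; only scales with $2^i\le|C|$ are relevant for $C$. The counting input is a scale-localized form of Karger's cut-counting bound: any cut of value at most $\alpha 2^i$ that contains an edge $e=(u,v)\in E_i$ is a $u$--$v$ cut of value at most $\alpha 2^i\le\alpha c_e$, and since the minimum $u$--$v$ cut has value $c_e\ge 2^i$, Karger's bound implies there are only $n^{O(\alpha)}$ such cuts. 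One then argues, scale by scale and over a dyadic range of cut values, that every $W_C^{(i)}$ stays within a suitable fraction of $\e|C|$ of its mean simultaneously for all cuts, using the Bernstein estimate for $W_C^{(i)}$, whose per-term bound is $2^{i+1}/\rho$ and whose variance is at most $(2^{i+1}/\rho)|C\cap E_i|$, both favourably small since $|C|\ge 2^i$ and $|C\cap E_i|\le|C|$ (with Lemma~\ref{lm:k-weak} bounding the total number of such edges in $G$). Summing the per-scale deviations then gives $|W_C-|C||\le\e|C|$ for every cut $C$ with high probability, so $G'$ is an $\e$-sparsification.

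I expect the balancing in this last step to be the main obstacle. The tension is that the scale-localized cut count $n^{O(\alpha)}$ grows with the cut value while the per-scale concentration weakens for cuts with few edges at a given scale, so the error budget $\e|C|$ must be apportioned among the $\Theta(\log n)$ active scales according to the cut's connectivity profile rather than uniformly; a naive even split, combined with $n^{O(\alpha)}$, would only yield $\rho=\Theta(\log^3 n/\e^2)$. Obtaining the claimed $\rho=\Theta(\log^2 n/\e^2)$ --- matching the cut-counting resolution, the Bernstein parameters, and the allocation of the budget --- is the technical heart of the argument; by contrast the remaining ingredients (the expected-size computation, the per-cut Chernoff bound, and the cut-counting lemma itself) are comparatively routine.
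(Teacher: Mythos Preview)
The paper does not prove this theorem at all: it is quoted from \cite{FHHP11} as a black box, so there is no ``paper's own proof'' to compare against. What you have written is an attempted reconstruction of the argument in \cite{FHHP11}, and the expected-size bound and the single-cut Bernstein estimate are fine. The substantive gap is in your cut-counting step.

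You write that a cut of value at most $\alpha 2^i$ containing an edge $e=(u,v)\in E_i$ is a $u$--$v$ cut of value at most $\alpha c_e$, and that ``Karger's bound implies there are only $n^{O(\alpha)}$ such cuts.'' Karger's cut-counting lemma bounds the number of \emph{global} cuts of value at most $\alpha$ times the \emph{global} minimum cut; it says nothing about near-minimum $s$--$t$ cuts, and indeed the number of $u$--$v$ cuts within a constant factor of the minimum $u$--$v$ cut can be exponential (take a long $u$--$v$ path). So as stated this step does not go through, and this is exactly the place where edge-connectivity sampling differs nontrivially from strong-connectivity sampling.

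The fix in \cite{FHHP11} is structural rather than a direct $s$--$t$ cut count: one works in the subgraph $G_i$ spanned by all edges of connectivity at least $2^i$. Every connected component of $G_i$ has global minimum cut at least $2^i$, so Karger's lemma legitimately bounds the number of small \emph{global} cuts inside each component. The key observation is that the projection $C\cap E_i$ of any cut $C$ of $G$ is determined by how $C$ partitions the vertices of the component of $G_i$ it meets, so one is really union-bounding over global cuts of $G_i$-components, not over $u$--$v$ cuts in $G$. With this decomposition in hand your per-scale Bernstein bound and budget-allocation discussion are on the right track, and the $\log^2 n$ (rather than $\log^3 n$) dependence comes out of the refined counting in that framework.
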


\section{Sparsification with free randomness}
\label{sec:unlim-indep} 
In this section we present a dynamic sparsifier under the assumption that the algorithm has access to $\tilde{\Theta}(n^2)$ random words. We will remove this assumption in Section~\ref{sec:lim-indep}. 

Our input graph $G$ is undirected and unweighted.
As in \cite{agm}, we will use the following representation of $G$:

\begin{definition}
Given an unweighted graph $G=(V, E)$, let $A_G$ be the $n\times {n \choose 2}$ matrix with entry $(u, (v, w))\in [n]\times {[n] \choose 2}$ and $v < w$ given by 
\[
a_{u, (v, w)}=
\begin{cases}
1&\text{if $u=v$ and $(v, w)\in E$}\\
-1&\text{if $u=w$ and $(v, w)\in E$}\\
0&\text{otherwise}
\end{cases}
\]\end{definition}

Updates to the graph $G$ in the form of the addition or deletion of an edge arrive one at a time in a streaming fashion.  An update cannot delete an edge that does not exist or add one that already does, but other than these restrictions the order is adversarial, and the stream can be arbitrarily long.
We need to maintain a data structure using only $\tilde{O}(n)$ space that allows us to efficiently construct an $\e$-sparsifier of the current graph $G$ after any sequence of updates. We will accomplish this using a collection of linear sketches of the rows of $A_G$.

Our algorithm has two components: the first will maintain an estimate of the connectivity of each edge and therefore its sampling rate (as discussed in Section~\ref{sec:prelim}), and the second will store the actual samples.
The former uses the tools developed by Ahn \etal \cite{agm}, and the latter is based on the technique of sparse recovery developed in the sketching and compressed sensing literature \cite{gilbertindyksurvey}.

Before delving into the details, we elaborate on our use of randomness.
In this section we assume that for each pair $(u, v)\in [n]^2$ the algorithm has access to a uniformly random number $h_{(u, v)}\in [0, 1]$.  In fact, we will need $O(\log n)$ independent copies of these random numbers, which we will denote by $h^b_{(u,v)}$, $b=1,\ldots, O(\log n)$. These random variables will be used to estimate sampling rates for edges of $G$. We will also assume access to independent random numbers $g^b_{(u,v)}\in [0, 1]$, $b=1,\ldots, O(\log n)$, which we will use to determine a partition of the vertex set needed for sampling. Finally, we also assume access to independent random numbers $g^*_{(u, v)}\in [0, 1]$ which will be used to sample edges of the sparsifier.  
Once $g$, $g^*$ and $h$ are sampled, they are fixed for the duration of the algorithm. This is important for handling deletions, as it ensures that edges can be removed from exactly those sketches to which they have been added.

It will be convenient to think of all these numbers as independent in this section, even though this is not feasible in subquadratic space in the semi-streaming model.  In Section~\ref{sec:lim-indep} we will show that using numbers that are only $\tilde O(1)$-wise independent for fixed $u$ and independent for different $v$ is sufficient, leading to a space-efficient solution. 
Some of our subroutines will also require their own internal entropy, but the total used will be only $\tilde{O}(n)$ words.

\subsection{Estimating edge connectivity}

The building block of our connectivity estimates is the following result from \cite{agm} that finds connected components by sketching the rows of $A_G$:

\begin{theorem}[\cite{agm}]
\label{thm:stream-connectivity}
There is a single-pass, linear sketch-based algorithm supporting edge additions and deletions that uses $O(n\log^3 n)$ space and returns a spanning forest of the graph with high probability.
\end{theorem}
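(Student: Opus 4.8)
The plan is to realize Bor\r{u}vka's connected-components algorithm on top of an $\ell_0$-sampling sketch, exploiting the fact that the rows of $A_G$ linearly encode adjacency. The key structural observation is that for any vertex set $S\subseteq V$ the aggregate row $\sum_{u\in S}(A_G)_u$ is supported exactly on the edges with one endpoint in $S$ and one outside: an edge internal to $S$ contributes $+1$ at one endpoint and $-1$ at the other and cancels, whereas each boundary edge survives carrying a $\pm1$ entry that certifies it is a genuine edge of $G$. Hence, if we maintain a \emph{linear} sketch of each row $(A_G)_u$, then by linearity we can form the sketch of $\sum_{u\in S}(A_G)_u$ for any current supernode $S$ at no cost and sample from it an edge leaving $S$. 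Because the sketches are linear in $A_G$ and all their internal hash and fingerprint functions are fixed in advance, an edge insertion or deletion is just an additive $\pm1$ update to the sketches of its two endpoints, and a deletion exactly undoes the matching insertion; this is what gives a single pass tolerating deletions.

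First I would recall the $\ell_0$-sampling primitive from the sketching and sparse-recovery literature: a linear sketch of a vector $x\in\mathbb{R}^N$ that, with probability at least $1-\delta$, returns a near-uniformly random coordinate of $\mathrm{supp}(x)$ together with its value (and correctly reports $\mathrm{supp}(x)=\emptyset$ otherwise). Its construction subsamples the $N$ coordinates at geometrically decreasing rates $1,\tfrac12,\tfrac14,\dots$; at each of the $O(\log N)$ levels it keeps a $1$-sparse recovery sketch---the weight $\sum_i x_i$, the index moment $\sum_i i\,x_i$, and a random polynomial fingerprint $\sum_i x_i z^i$ over a prime field---and a level at which exactly one coordinate survived is detected by the fingerprint and decoded; $O(\log(1/\delta))$ independent repetitions make the overall failure probability at most $\delta$. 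For us $N=\binom{n}{2}$, so $\log N=\Theta(\log n)$, and we set $\delta=n^{-c}$ for a suitable constant so that a union bound over all invocations succeeds; each such sampler then occupies $O(\log^2 n)$ machine words.

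Next I would run $R=\Theta(\log n)$ rounds of Bor\r{u}vka, keeping $R$ \emph{independent} families of per-vertex $\ell_0$-sampler sketches, one family per round. At the start of round $i$ we have a partition of $V$ into supernodes (the connected components of the edges recovered so far); for each supernode $S$ we add up the round-$i$ sketches of the vertices of $S$, invoke the $\ell_0$-sampler to obtain one edge of the boundary $\partial S$ when $\partial S\neq\emptyset$, append every recovered edge to our forest, and recontract (a near-linear-time union-find step). Since every supernode that is not already a full connected component of $G$ acquires at least one incident edge and is merged with another supernode, the number of not-yet-complete supernodes at least halves each round, so after $\lceil\log_2 n\rceil$ rounds every component has been fully assembled and the union of the recovered edges spans each component; we output a spanning forest of that union. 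Correctness is immediate: every recovered coordinate has value $\pm1$, hence is a true edge, so no phantom edge is ever added, and the $1$-sparsity fingerprint rules out spurious decodings. For space, $n$ vertices times $\Theta(\log n)$ rounds times $O(\log^2 n)$ words per sampler is $O(n\log^3 n)$.

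The step I expect to be the crux is the independence across rounds. One cannot reuse a single family of sketches for all $\Theta(\log n)$ rounds, because the contraction pattern produced after round $i$ is itself a function of the round-$i$ randomness, so conditioned on that pattern the "fresh" sample one would extract in round $i+1$ from the same sketch is no longer uniform---indeed it is completely determined---and the Bor\r{u}vka progress argument collapses. Using a separate independent family per round decouples the analysis: in round $i$ the supernode structure is fixed before any round-$i$ randomness is revealed, so the $\ell_0$-sampler guarantee applies verbatim, at the price of one extra $\log n$ factor in space. The only other point needing care is the union bound, over at most $n$ supernodes in each of $\Theta(\log n)$ rounds, which forces $\delta=1/\mathrm{poly}(n)$ and hence the $\log(1/\delta)=\Theta(\log n)$ repetition count inside each $\ell_0$-sampler.
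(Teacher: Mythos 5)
Your proposal is correct and is essentially the proof of Ahn, Guha, and McGregor in the cited reference \cite{agm}: the paper itself states this theorem without proof, and your reconstruction---the cancellation property of $\sum_{u\in S}(A_G)_u$, per-vertex $\ell_0$-samplers, Bor\r{u}vka contraction, and crucially a fresh independent family of sketches per round to break the dependence between the contraction pattern and the sampler randomness---matches their argument, including the $O(n\log^3 n)$ space accounting.
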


Our sketch is simple.
We consider random samples of the input graph at geometric sampling rates and find connected components in each sample. Specifically, for each $a=1,\ldots, O(\log n)$ denote  by $G^b_a=(V, E^b_a)$ a subsample of the edges of $G$ obtained by setting
\begin{equation}\label{eq:e-sample}
E^b_a=\left\{(u,v)\in E: \min\{h^b_{(u,v)}, h^b_{(v, u)}\}<2^{-a}\right\}.
\end{equation}
We maintain connectivity data structures $C^b_a$ for each of the subgraphs $G^b_a$ for $a=1,\ldots, O(\log n),$ $b=1,\ldots, O(\log n)$ using Theorem \ref{thm:stream-connectivity}.
It is important to note that use of the functions $h_{(u,v)}$ for sampling rather than fresh randomness allows us to handle deletions properly by deleting a removed edge only from the sketches that we added it to by simply sampling and using the consistent samples as input to the sketch in Theorem \ref{thm:stream-connectivity}.

\begin{remark}
Note that an edge $(u,v)$ is included in $E^b_a$ if the minimum of $h^b_{(u,v)}$ and $h^b_{(v, u)}$ is smaller than $2^{-a}$. This will be important for the proof of correctness for hash functions with limited independence in Section~\ref{sec:lim-indep}.
\end{remark}

The connectivity structure of the subgraphs $G^b_a$ allows us to associate a sampling rate with each edge.  For an edge $(u, v)\in E$ we use the smallest sampling rate $2^{-a}$ at which $u$ and $v$ are still in the same component as an estimate of the sampling rate for $(u, v)$. Independent repetition $O(\log n)$ times reduces the variance sufficiently to get precise estimates.

\newcommand{\V}{{\cal V}}

Define $\V_a$ as the partition of vertices in $V$ induced by the intersection of all partitions $C^b_a, b=1,\ldots, O(\log n)$. That is, vertices $u$ and $v$ are in the same connected component in $\V_a$ if and only if they are connected in $C^b_a$ for all $b=1,\ldots, O(\log n)$.
For an edge $(u, v)$ let $L(u, v)$---the {\em level of $(u,v)$}---denote the largest $a$ such that $u$ and $v$ are in the same component in $\V_a$, and for a vertex $v$ let the {\em level} $L(v)$ denote the largest $a$ such that $v$ is not a singleton in $\V_a$.

The level $L(e)$ of an edge serves as a proxy for its connectivity:

\begin{lemma}
\label{lm:level}
For all edges $e\in E$ one has 
\[
\Theta(s_e/\log n)\leq 2^{L(e)}\leq 2c_e
\]
with high probability, where $s_e$ denotes strong connectivity and $c_e$ denotes edge connectivity.
\end{lemma}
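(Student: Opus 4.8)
The plan is to prove the two inequalities $2^{L(e)} \le 2c_e$ and $2^{L(e)} \ge \Theta(s_e/\log n)$ separately, both by understanding what it means for the endpoints of $e=(u,v)$ to lie in the same component of the refined partition $\V_a$.

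For the upper bound $2^{L(e)} \le 2c_e$: consider the cut of size $c_e$ separating $u$ from $v$ in $G$. I would fix the sampling rate $2^{-a}$ with $2^a > 2c_e$ and show that with high probability \emph{at least one} of the $O(\log n)$ samples $G^b_a$ disconnects $u$ from $v$, so that $u,v$ are not in the same component of $\V_a$ and hence $L(e) < a$. The key calculation: each of the $c_e$ edges crossing the $(u,v)$-cut survives into $G^b_a$ independently with probability $\min\{h^b_{(u,v)},h^b_{(v,u)}\} < 2^{-a}$, i.e.\ roughly $2^{-a} \le 1/(2c_e)$, so the expected number of surviving cut edges is at most $1/2$; by Markov the probability that \emph{any} cut edge survives in a given $G^b_a$ is at most $1/2$ (more carefully, I should account for the $\min$ of two uniforms, which only makes the survival probability smaller). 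Since the $b$'s are independent, the probability that $u,v$ stay connected through all $O(\log n)$ copies is at most $2^{-O(\log n)} = n^{-\Omega(1)}$. A union bound over the $O(\log n)$ relevant values of $a$ and over all $\binom{n}{2}$ edges, with the constant in the $O(\log n)$ repetitions chosen large enough, gives the claim with high probability. (Note we only need the statement for the largest relevant $a$, since the partitions $\V_a$ are nested.)

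For the lower bound $2^{L(e)} \ge \Theta(s_e/\log n)$: let $k = s_e$, so $e$ lies in some $k$-strong component $H$ of $G$, a $k$-connected vertex-induced subgraph. Fix $a$ with $2^{-a} = \Theta(\log n / k)$, i.e.\ $2^a = \Theta(k/\log n)$; I want to show that in \emph{every} $G^b_a$ the vertices of $H$ remain connected, so that $u$ and $v$ are in the same component of $\V_a$ and $L(e) \ge a$. Each edge of $H$ survives into $G^b_a$ independently with probability $\approx 2^{-a} = \Theta(\log n/k)$ (again the $\min$ of two uniforms needs a small amount of care, but it only shifts constants), so restricted to $V(H)$ the sample $G^b_a$ is exactly a random subgraph of a $k$-connected graph at sampling rate $\Theta(\log n/k)$. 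By Karger's sampling theorem (Theorem~\ref{thm:karger}), with the constant chosen appropriately, this sample is with high probability a $\Theta(\log n)$-sparsifier of $H$ — in particular every cut of $H$ has positive value in the sample, so the sample is connected on $V(H)$. Taking a union bound over the $O(\log n)$ copies $b$, over the $O(\log n)$ levels $a$, and over all vertices (equivalently, all maximal strong components at each connectivity threshold, of which there are at most $n$), we get that with high probability $H$ stays connected in every $\V_a$ at level $a = \Theta(k/\log n)$, giving $2^{L(e)} \ge 2^a = \Theta(s_e/\log n)$.

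The main obstacle I anticipate is the bookkeeping of the union bounds and making the ``with high probability'' uniform: there are polynomially many edges and $O(\log n)$ scales, and the $O(\log n)$-fold independent repetition is what buys the extra $n^{-\Omega(1)}$ slack needed to absorb all of these, so I must track that the hidden constant in ``$O(\log n)$ repetitions'' is chosen after the desired failure exponent $d$. A secondary technical point is the use of $\min\{h^b_{(u,v)}, h^b_{(v,u)}\}$ rather than a single uniform: here this only helps (it makes edges slightly less likely to be sampled, which weakens nothing in the lower-bound direction once constants are adjusted, and strengthens the upper bound), but it is flagged in the Remark because under limited independence in Section~\ref{sec:lim-indep} the symmetrization via the $\min$ of two hashes keyed on ordered pairs is what will let the edge-sampling events behave well. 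For this section, where everything is fully independent, invoking Theorem~\ref{thm:karger} as a black box for the lower bound and a direct Chernoff/Markov argument for the upper bound suffices.
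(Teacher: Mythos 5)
Your proposal follows essentially the same route as the paper: the lower bound via Karger's sampling theorem (Theorem~\ref{thm:karger}) applied to the $k$-strong component containing $e$, and the upper bound via Markov plus the $O(\log n)$-fold independent repetition, with the same union-bound bookkeeping.

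One small correction: you have the effect of the $\min$ backwards. Since $\prob[\min\{h^b_{(u,v)},h^b_{(v,u)}\}<2^{-a}] = 2\cdot 2^{-a}-2^{-2a} > 2^{-a}$, taking the minimum of two uniforms makes an edge \emph{more} likely to be sampled, not less. This only \emph{helps} the lower bound (oversampling a $k$-connected graph preserves connectivity) and slightly \emph{hurts} the upper bound: for a cut of size $c_e$ the expected number of survivors at rate $2^{-a}$ is up to $2c_e 2^{-a}$, so the Markov step needs $2^a \gtrsim 4c_e$ rather than $2c_e$. This costs only a constant (and the paper's own one-line proof is equally casual about it), but the sentence claiming the $\min$ ``strengthens the upper bound'' is wrong as stated.
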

\begin{proof}
The first inequality follows from the fact that $\Theta(\log n)\cdot 2^{L}$-strongly connected components will stay connected with high probability by Theorem~\ref{thm:karger} when the functions $h^b_{(u, v)}$ used for sampling are truly random. Lemma~\ref{lm:t-wise-conn} from Section~\ref{sec:lim-indep} gives the result for hash functions with limited independence.

The second inequality follows by noting that  if there is a cut separating $u$ and $v$ of size at most $2^{L}/2$, then it will be empty with probability at least $1/2$ when we sample at rate $2^{-L}$ by Markov's inequality, so $u$ and $v$ will get disconnected in one of the $O(\log n)$ independent repetitions with high probability.
\end{proof}

This implies the levels can be used as sampling rates: 

\begin{lemma}
\label{lm:lvl-sampling}
Sampling edges independently at rate $p_e=O\left(\log^2 n/(\e^2 2^{L(e)})\right)$ and weighting sampled edges with $1/p_e$ produces a sparsifier with $O(n\log^3 n/\e^2)$ edges high probability. 
\end{lemma}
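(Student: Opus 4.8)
The plan is to derive both halves of the statement from the two-sided estimate of Lemma~\ref{lm:level}, using the edge-connectivity sampling theorem of Fung \etal (Theorem~\ref{thm:wk-sampling}) for the sparsification guarantee and the bound $\sum_e 1/s_e\le n-1$ of Lemma~\ref{lm:bounded} for the edge count. Throughout I would condition on the high-probability event of Lemma~\ref{lm:level}, so that $\Theta(s_e/\log n)\le 2^{L(e)}\le 2c_e$ holds simultaneously for all $e\in E$; the final guarantee then follows by a union bound over this event together with the two failure events below. Write $p_e=\min\{c\log^2 n/(\e^2 2^{L(e)}),1\}$ for a large enough absolute constant $c$, and observe that capping at $1$ only decreases the edge count and does not hurt correctness.

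For correctness I would use only the \emph{upper} bound $2^{L(e)}\le 2c_e$: it gives $p_e\ge\min\{(c/2)\log^2 n/(\e^2 c_e),1\}\ge\min\{\rho/c_e,1\}$, where $\rho=\Theta(\log^2 n/\e^2)$ is as in Theorem~\ref{thm:wk-sampling}, provided $c$ exceeds twice the hidden constant in $\rho$. Thus our sampling probabilities dominate the edge-connectivity probabilities of Fung \etal everywhere. Exactly as Corollary~\ref{cor:oversampling} is extracted from the proof of Theorem~\ref{thm:bk-sampling}, the proof of Theorem~\ref{thm:wk-sampling} shows that over-sampling---using any probabilities $\tilde p_e\ge\min\{\rho/c_e,1\}$ and weighting each sampled edge by $1/\tilde p_e$---still yields an $\e$-sparsifier with high probability. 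Applying this with $\tilde p_e=p_e$ shows that the sampled graph is an $\e$-sparsifier w.h.p.

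For the edge count I would use the \emph{lower} bound $2^{L(e)}\ge\Theta(s_e/\log n)$: it gives $p_e\le c\log^2 n/(\e^2 2^{L(e)})=O(\log^3 n/(\e^2 s_e))$, so the expected number of retained edges is $\sum_{e\in E}p_e=O(\log^3 n/\e^2)\sum_{e\in E}1/s_e\le O(\log^3 n/\e^2)\cdot(n-1)=O(n\log^3 n/\e^2)$ by Lemma~\ref{lm:bounded}. Conditioned on the levels (which are functions of the connectivity sketches and independent of the randomness $g^*$ used for this sampling), the number of retained edges is a sum of independent indicators with fixed means $p_e$, so a Chernoff bound shows it stays $O(n\log^3 n/\e^2)$ except with probability $n^{-\Omega(1)}$ (when the bound is not already trivially met by $m$, the expectation dominates $\log n$).

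The point requiring the most care is the interplay of constants together with the use of Theorem~\ref{thm:wk-sampling} in over-sampled form: the single constant $c$ must be taken large enough that the bound $2^{L(e)}\le 2c_e$ pushes $p_e$ above the Fung \etal threshold $\min\{\rho/c_e,1\}$, while at the same time $2^{L(e)}\ge\Theta(s_e/\log n)$ must keep $\sum_e p_e$ at $O(n\log^3 n/\e^2)$; and one needs the (routine, by the argument behind Corollary~\ref{cor:oversampling}) robustness of Theorem~\ref{thm:wk-sampling} to replacing $\rho/c_e$ by larger probabilities with correspondingly smaller weights. Everything else is a union bound over the failure events of Lemma~\ref{lm:level}, of Theorem~\ref{thm:wk-sampling}, and of the Chernoff estimate.
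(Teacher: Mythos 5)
Your proposal is correct and follows essentially the same route as the paper: the upper bound $2^{L(e)}\le 2c_e$ from Lemma~\ref{lm:level} shows the rates dominate those of Theorem~\ref{thm:wk-sampling} (so oversampling yields a sparsifier), and the lower bound $2^{L(e)}\ge\Theta(s_e/\log n)$ together with Lemma~\ref{lm:bounded} bounds the number of edges by $O(n\log^3 n/\e^2)$. You are somewhat more explicit than the paper about conditioning on the event of Lemma~\ref{lm:level}, the robustness of Theorem~\ref{thm:wk-sampling} to oversampling, and the Chernoff step upgrading the expected edge count to a high-probability one, but these are refinements of the same argument rather than a different approach.
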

\begin{proof}
By Theorem \ref{thm:wk-sampling}
sampling at rate $O(\log^2 n/(\e^2c_e))$ works. By Lemma \ref{lm:level} $1/2^{L(e)} \ge 1/(2c_e)$, and oversampling only improves concentration. This proves the statement assuming that the sampling of edges is independent.  

The expected size of the sample is bounded by $\sum_{e\in E} p_e=O(\log^3 n/\e^2)\cdot \sum_{e\in E} 1/s_e=O(n\log^3 n/\e^2)$, where we used Lemma~\ref{lm:level} to bound $p_e=O(\log n)/s_e$ and the fact that $\sum_{e\in E}1/s_e\leq n-1$ by Lemma~\ref{lm:bounded}.
\end{proof}

\subsection{Maintaining edge samples}

We now show how to maintain small space sketches that will allow us to reconstruct the edges of the sparsifier. 
Our basic tool is the technique of sparse recovery from the field of compressed sensing \cite{gilbertindyksurvey}. A vector $A$ of dimension $N$ is {$k$-sparse} if it has at most $k$ non-zero entries, and a $k$-sparse, or approximately $k$-sparse, signal can be recovered with high probability from $O(k\log(N/k))$ non-adaptive linear measurements. Here we use the following result of Cormode and Muthukrishnan \cite{cormodemuthu} that allows recovery in $\tilde{O}(k)$ time at the cost of slightly sub-optimal sketch size and error guarantees:

\begin{theorem}[\cite{cormodemuthu}]
\label{thm:k-sparse}
We can construct a randomized $0/1$ matrix $T$ of dimension $O(ck\log^3n/\e^2) \times N$ such that for any $k$-sparse signal $A$ of dimension $N$, given the transformation $TA$ we can reconstruct $A$ exactly with probability at least $1-n^{-c}$ in time $O(c^2k\log^3 n/\e^2)$. The matrix $T$ is constructed using $O(1)$-wise independent hash functions, and individual entries can be queried efficiently.
\end{theorem}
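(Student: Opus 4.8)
The plan is to realize $T$ as the vertical concatenation of $R=\Theta(c\log N)$ independent ``hashing layers,'' each a Count\nobreakdash-Sketch\nobreakdash-style bucketing augmented with index bit\nobreakdash-tests. Layer $r$ is specified by a pairwise\nobreakdash-independent (hence $O(1)$\nobreakdash-wise independent) hash function $h_r\colon[N]\to[B]$ with $B=\Theta(k)$; it contributes, for every bucket $b\in[B]$ and every position $\ell\in\{0,1,\dots,\lceil\log_2 N\rceil\}$, one row whose $i$-th entry is $1$ precisely when $h_r(i)=b$ and either $\ell=0$ or the $\ell$-th bit of $i$ equals $1$, and $0$ otherwise. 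This is a $0/1$ matrix with $R\cdot B\cdot(1+\lceil\log_2 N\rceil)=O(ck\log^2 N)$ rows; since $N=\mathrm{poly}(n)$ in our application this is $O(ck\log^2 n)$, comfortably inside the stated bound $O(ck\log^3 n/\e^2)$, the extra $\log n/\e^2$ leaving room for the noise\nobreakdash-tolerant variant of the construction and for the verification budget below. A single entry of $T$ is evaluated with one application of $h_r$ plus a bit lookup, so entries are queried in $O(1)$ time.

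Given $y=TA$, write $\sigma_{r,b}$ for the coordinate of $y$ indexed by $(r,b,0)$ and $\sigma_{r,b}^{(\ell)}$ for the coordinate indexed by $(r,b,\ell)$, so that $\sigma_{r,b}=\sum_{i:h_r(i)=b}A_i$ and $\sigma_{r,b}^{(\ell)}=\sum_{i:h_r(i)=b}[\text{$\ell$-th bit of $i$ is }1]\cdot A_i$. Decode each bucket separately: if $\sigma_{r,b}=0$ discard it; otherwise test whether $\sigma_{r,b}^{(\ell)}\in\{0,\sigma_{r,b}\}$ for every $\ell$, and if so form the candidate index $i^\star$ whose $\ell$-th bit is $1$ iff $\sigma_{r,b}^{(\ell)}=\sigma_{r,b}$, with candidate value $\sigma_{r,b}$; keep $(i^\star,\sigma_{r,b})$ only if the test passed and $h_r(i^\star)=b$. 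Output the de\nobreakdash-duplicated union of all surviving pairs. Each bucket costs $O(\log N)$ time and each of the $O(RB)$ candidates is verified (against the re\nobreakdash-hash condition and, if desired, against a few extra independent verification layers), which gives the claimed $O(c^2k\log^3 n/\e^2)$ running time.

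Correctness splits into two parts. (i) Every nonzero coordinate is recovered: fix $i\in\mathrm{supp}(A)$; by pairwise independence $\Pr[\exists j\in\mathrm{supp}(A)\setminus\{i\}:h_r(j)=h_r(i)]\le (k-1)/B\le 1/4$ for a suitable constant in $B$, so $i$ is \emph{isolated} in layer $r$ with probability $\ge 3/4$, hence never isolated across the $R$ independent layers with probability $\le 4^{-R}\le N^{-c}/k$; a union bound over the $\le k$ nonzero coordinates shows that with probability $\ge 1-N^{-c}$ each of them is isolated in some layer, and an isolated bucket decodes to exactly the right $(\text{index},\text{value})$ pair. (ii) No spurious coordinate is output: a surviving pair comes from a bucket whose bit\nobreakdash-sums all lie in $\{0,\sigma_{r,b}\}$ and whose decoded index hashes back to that bucket; an isolated bucket yields only the correct pair, while a bucket holding two or more nonzero coordinates can pass both tests only under cancellation patterns which, because $A$ is fixed before the hash functions are drawn, occur with probability $\le N^{-c}$ after a union bound over buckets (equivalently, re\nobreakdash-verifying each output coordinate against $\Theta(c\log N)$ further independent hashed rows kills false positives up to the stated failure probability while avoiding any reasoning about real cancellations). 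The only randomness used is the $R$ pairwise\nobreakdash-independent hash functions, each of description length $O(\log N)$, so $T$ is built entirely from $O(1)$\nobreakdash-wise independent families.

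The step I expect to be the actual obstacle is (ii) together with the quantitative bookkeeping: the isolation argument in (i) is routine, but suppressing false positives cleanly---and, in the noise\nobreakdash-tolerant version that one actually wants downstream, bounding the $\ell_1$ mass that colliding buckets contribute to the reconstruction, which is exactly where the parameter $\e$ and the extra logarithm enter the measurement count---requires the careful balancing of $B$, $R$, and the verification budget carried out by Cormode and Muthukrishnan.
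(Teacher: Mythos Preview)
The paper does not prove this theorem at all: it is quoted as a black box from Cormode and Muthukrishnan~\cite{cormodemuthu}, so there is no ``paper's own proof'' to compare against. Your write-up is therefore not a comparison target but a reconstruction of the cited result.

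As a reconstruction, your sketch is along the right lines and matches the Cormode--Muthukrishnan design: independent hashing layers with $B=\Theta(k)$ buckets, bit-test rows to identify a singleton in each bucket, and $\Theta(c\log N)$ independent repetitions to drive the failure probability down. The isolation argument in part~(i) is correct and standard.

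The soft spot you already flag is real. In part~(ii), the claim that a multi-element bucket passes all bit tests and the re-hash test ``only under cancellation patterns which \dots\ occur with probability $\le N^{-c}$'' is not justified by pairwise independence alone: the signal $A$ is adversarial, the cancellation structure is determined by which support elements collide, and with only $2$-wise independence you do not get the concentration needed to bound this directly. Your parenthetical fallback---fresh $\Theta(c\log N)$ verification layers per candidate---is the right fix, but you should make it the primary argument rather than an aside, and spell out why a false candidate $(i^\star,v)$ fails each verification layer with constant probability (namely: when $i^\star$ is isolated in the verification hash, the bucket sum equals $A_{i^\star}\neq v$). Also note that the $\e$ in the stated bounds is not actually used in your exact-recovery construction; in the paper's application $k$ itself is already $\Theta(\log^3 n/\e^2)$, so the $\e$ is absorbed there, and the theorem as stated is simply quoting Cormode--Muthukrishnan's parameters with slack.
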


We will also need the following result by Indyk \cite{indykstable} on sketching $\ell_1$ norms:

\begin{theorem}[\cite{indykstable}]
\label{thm:degree-sketch}
There is a linear sketch-based algorithm using $O(c\log^2 N/\e^2)$ space and $O(c\log^2 N/\e^2)$ random bits that can estimate the $\ell_1$ norm of a vector of dimension $N$ to within a factor of $(1\pm \e)$ with probability $1-N^{-c}$. The sketch can be updated in $O(\log N)$ time.
\end{theorem}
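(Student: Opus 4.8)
The plan is to use Indyk's stable-distribution sketch. The standard Cauchy distribution is $1$-stable: if $C_1,\dots,C_N$ are i.i.d.\ standard Cauchy variables and $x\in\mathbb{R}^N$, then $\sum_{i=1}^N x_i C_i$ has exactly the law of $\|x\|_1\cdot C$ for a single standard Cauchy $C$. So I would take $S$ to be an $m\times N$ matrix with i.i.d.\ standard Cauchy entries, where $m=\Theta(c\log N/\e^2)$, and maintain the sketch $y=Sx$. Since $y$ is a linear function of $x$, an update $x_i\leftarrow x_i+\Delta$ is handled by $y_j\leftarrow y_j+\Delta\, S_{ji}$ for each $j$, so additions and deletions are treated identically; the sketch itself is $m$ numbers, i.e.\ $O(c\log N/\e^2)$ machine words ($O(c\log^2 N/\e^2)$ bits), provided the needed entries of $S$ can be regenerated cheaply (see below).

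The estimator is $\widehat{\|x\|_1}=\operatorname{median}_{j\in[m]}|y_j|$. Correctness rests on two facts. First, each $|y_j|$ has the law of $\|x\|_1\cdot|C|$, and the median of $|C|$ is exactly $1$, because $\Pr[\,|C|\le t\,]=\tfrac{2}{\pi}\arctan t$ equals $\tfrac12$ at $t=1$. Second, this CDF has strictly positive derivative ($\tfrac1\pi$) at $t=1$, so $\Pr[\,|C|\le 1-\e\,]\le \tfrac12-\Omega(\e)$ and $\Pr[\,|C|\ge 1+\e\,]\le\tfrac12-\Omega(\e)$. A Chernoff bound over the $m$ independent coordinates then shows that, except with probability $e^{-\Omega(\e^2 m)}=N^{-c}$, fewer than $m/2$ of the $|y_j|$ fall below $(1-\e)\|x\|_1$ and fewer than $m/2$ fall above $(1+\e)\|x\|_1$, pinning the empirical median in $[(1-\e)\|x\|_1,(1+\e)\|x\|_1]$. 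This yields the $(1\pm\e)$ guarantee with the claimed failure probability.

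Two implementation points remain. The per-update cost in the direct implementation is $O(c\log N/\e^2)$; it can be reduced to near-logarithmic by, within each of $O(c\log N)$ independent repetitions, hashing the $N$ coordinates into $\Theta(1/\e^2)$ buckets, keeping one Cauchy combination per bucket, and taking a median of per-repetition medians, so that an update touches only $O(\log N)$ stored values. The more delicate point, and the one I expect to be the main obstacle, is the random-bit bound: storing all $mN$ Cauchy entries is infeasible, and since $|C|$ has infinite expectation the usual bounded-independence/moment arguments do not apply. Following Indyk, the fix is to (i) round each Cauchy entry to $O(\log N)$ bits and truncate its magnitude to $\mathrm{poly}(N)$, which perturbs $y$ and hence the median by at most a $(1\pm\e)$ factor with high probability, and then (ii) observe that computing $\operatorname{median}_j|y_j|$ from the stream is carried out by a small-space algorithm that reads its random string in one pass, so that Nisan's pseudorandom generator fools it from a seed of only $O(c\log^2 N/\e^2)$ truly random bits. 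Checking that the heavy tail of $|C|$ spoils neither the truncation step nor the PRG argument is the crux; the median concentration in the fully independent model is routine.
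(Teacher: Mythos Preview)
Your proposal is essentially Indyk's original argument and is correct in outline; however, the paper does not prove this statement at all---it is quoted as an external result from \cite{indykstable} and used as a black box. There is therefore no ``paper's own proof'' to compare against. What you have written is a faithful sketch of the construction in the cited reference: the $1$-stability of the Cauchy distribution gives each sketch coordinate the law of $\|x\|_1\cdot|C|$, the median of $|C|$ is $1$ with positive density there, and a Chernoff bound over $\Theta(c\log N/\e^2)$ independent repetitions yields the $(1\pm\e)$ guarantee with failure probability $N^{-c}$; derandomization via Nisan's generator after truncating the Cauchy variables is exactly how Indyk obtains the stated randomness bound. Your identification of the heavy-tail truncation and the PRG step as the delicate parts is accurate.
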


For a row $v$ of the matrix $A_G$ let $S^r_a(v)$ for $r=1,\ldots, O(\log n)$ denote linear sketches guaranteed by Theorem~\ref{thm:k-sparse} for $k=O(\log^3 n/\e^2)$ where $a$ corresponds to the geometric sequence of sampling rates, and $r=1,\ldots, O(\log n)$ are independent copies that are useful for recovery.  More precisely, $S^r_a(v)$ is a sketch of row $v$ in the matrix $A_{G_{a,r}'}$ where $G_{a, r}'$ has edges
\begin{equation}\label{eq:g-sample}
E'_{a, r}=\left\{(u,v)\in E: \min\{g^r_{(u,v)}, g^r_{(v, u)}\}<2^{-a}\right\}.
\end{equation}
Some of these sketches may accumulate more than $k$ edges and consequently cannot be decoded on their own, but we will prove this is not an issue.
We will also need sketches $d^r_a(v), r=1,\ldots, O(\log n)$ for the $\ell_1$-norm of the $v$-th row of the matrix $A_{G_{a, r}'}$. Note row $v$ of $A_{G_{a, r}'}$ contains a $\pm 1$ entry for each edge incident on $v$ in $G_{a, r}'$, so the $\ell_1$ norm corresponds exactly to the degree of $v$ in the sampled graph $G_{a, r}'$.
\begin{remark}
Note that we are using $O(\log n)$ independent samples $G'_{a, r}$ for each sampling rate $a$. This will be important in the proof of Lemma~\ref{lm:partition} below.
\end{remark}

Finally, we will also need another set of independent samples of $G$ that will be used to obtain the edges of the sparsifier. Let $G^*_{a}$ be the graph with edges
\begin{equation}\label{eq:g-hat-sample}
E^*_{a}=\left\{(u,v)\in E: \min\{g^*_{(u,v)}, g^*_{(v, u)}\}<2^{-a}\right\}.
\end{equation}
For each node $v$ and sampling rate $a$ we maintain sketches $S^*_a(v)$ of row $v$ in the matrix $A_{G^*_{a}}$ using Theorem~\ref{thm:k-sparse} for $k=O(\log^3 n/\e^2)$.
Here we do not need independent repetitions for each sampling rate $a$.

By the choice of the matrix $A_G$ and the linearity of the sketches, if ${\cal S} \subseteq V$ is a cut then $\sum_{v \in {\cal S}} d^r_a(v)$ is a sketch for the size of the cut and $\sum_{v \in {\cal S}} S^r_a(v)$ is sketch of a sample of its edges.  If $v$ is a supernode obtained by contracting a set of vertices ${\cal S}$, we write $d^r_a(v)$ to denote $\sum_{u \in {\cal S}} d^r_a(u)$ and similarly for $S^r_a(v)$ and $S^*_a(v)$.
For any fixed cut and fixed $G_a'$ the estimate given by $d^r_a(v)$ is close to expectation with high probability.

Before specifying the algorithm formally, we give the intuition behind it. Recall that for every vertex $u$ at level $L(u) =a$, we need to sample edges going from $u$ to vertices in $u$'s component in $\V_a$ with probability $\gamma \log^2 n/(\e^2 2^{a})$ for an appropriate constant $\gamma$. In order to do that, we will sample {\em all edges incident on $u$} with probability $\gamma \log^2 n/(\e^2 2^{a})$ and then throw away the ones that do not go to $u$'s component. In order to obtain such a sample, we will use the sketches $S^r_{a'}(u)$ that were made with sampling at rate $\gamma \log^2 n/(\e^2 2^{a})$. 

The main observation here is that if we contract connected components in $\V_{a+1}$ into supernodes, the resulting graph will have only $(\gamma \log n\cdot 2^{a+1})$-weak edges for a constant $\gamma$, so the
average degree will be no larger than $\gamma \log n\cdot 2^{a+1}$.  By repeatedly removing vertices with degree at most twice the average, nodes of this subgraph can be partitioned into sets $W_1, \ldots, W_t$ such that for each $i=1,\ldots, t$ and $u\in W_i$ the degree of $u$ in $W_{i}\cup\cdots\cup W_t$ is at most $4\gamma \log n 2^a$ and $t = O(\log n)$. Formally,

\begin{lemma} 
\label{lm:w-partition}
Suppose all edges in $G$ are $k$-weak. Then $V$ can be partitioned into $t=\log n$ sets $W_1,\ldots, W_t$ such that for all $v \in W_i$ the degree of $v$ when restricted to $W_i \cup \cdots \cup W_t$ is at most $2k$.
\end{lemma}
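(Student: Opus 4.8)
The plan is a layered peeling argument in which, at each step, we strip off all vertices whose degree is at most twice the current average degree. The one point that needs care is that Lemma~\ref{lm:k-weak} must be applied not just to $G$ but to every graph encountered along the peeling, so I would first record that $k$-weakness passes to vertex-induced subgraphs: if every edge of $G$ is $k$-weak and $S\subseteq V$, then every edge of $G[S]$ is $k$-weak. Indeed, if some edge $e$ of $G[S]$ had strong connectivity $j\ge k$ in $G[S]$, it would lie in a $j$-strong component $H$ of $G[S]$; but $H$ is a $j$-connected vertex-induced subgraph of $G[S]$, hence also a $j$-connected vertex-induced subgraph of $G$, and since the $j$-strong components of $G$ partition $V$, $H$ is contained in one of them, making $e$ $j$-strong in $G$ --- contradicting that $e$ is $k$-weak. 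Combined with Lemma~\ref{lm:k-weak}, this shows that every nonempty vertex-induced subgraph of $G$ has average degree at most $k$.

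With this in hand the construction is immediate. Put $G_1:=G$, and having defined $G_i$ on vertex set $V_i$ with average degree $\bar d_i\le k$, set
\[
W_i:=\{\,v\in V_i:\deg_{G_i}(v)\le 2\bar d_i\,\},\qquad G_{i+1}:=G_i[V_i\setminus W_i],
\]
stopping once $V_i=\emptyset$ and outputting $W_1,\dots,W_t$ (padded with empty sets to reach exactly $\log n$ classes). By Markov's inequality at most half the vertices of $G_i$ have degree exceeding $2\bar d_i$, so $|W_i|\ge|V_i|/2$ and $|V_{i+1}|\le|V_i|/2$; hence $t\le\log_2 n$. For the degree bound, note that $W_i\cup\cdots\cup W_t=V_i$, that $\deg_{G_i}(v)\le 2\bar d_i$ for $v\in W_i$ by construction, and that $2\bar d_i\le 2k$ by the observation above; this is exactly the claimed bound.

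The step I expect to be the main obstacle --- or rather, the one most easily overlooked --- is the monotonicity claim in the first paragraph. Without it one only knows $|E(G_i)|\le|E(G)|\le kn$, which says nothing useful about the average degree of $G_i$ once $|V_i|$ has shrunk, and then the Markov estimate (which is otherwise a one-line computation) would break down for the later rounds, and neither the bound $t\le\log n$ nor the per-vertex degree bound would follow. Everything else is routine.
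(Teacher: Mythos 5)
Your proof follows the same route as the paper's: iteratively peel off the vertices of degree at most twice the average, use Lemma~\ref{lm:k-weak} to bound the edge count of each residual induced subgraph, and apply Markov's inequality to halve the vertex set in each round. Your explicit check that $k$-weakness is inherited by vertex-induced subgraphs (so that Lemma~\ref{lm:k-weak} may be reapplied) is a step the paper leaves implicit in ``we can repeat this process,'' and it is indeed what makes the recursion legitimate. One caveat, which you share with the paper's own writeup: $|E(G_i)|\le k(|V_i|-1)$ gives average degree $\bar d_i\le 2k(|V_i|-1)/|V_i|<2k$, not $\le k$ as you assert, so Markov at threshold $2\bar d_i$ yields the per-vertex bound $4k$ rather than $2k$ (or, insisting on threshold $2k$, one loses the guarantee of halving); the constant is immaterial wherever the lemma is invoked, but as written the deduction ``average degree at most $k$'' does not follow from the cited lemma.
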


\begin{proof}
By Lemma~\ref{lm:k-weak} if $G$ has $n'$ nodes then it has at most $k(n'-1)$ edges. Let $W_1$ be the set of all nodes with degree at most $2k$. By Markov's inequality $W_1$ includes at least half the nodes. After removing $W_1$ and all incident edges we can repeat this process to find $W_2$, etc. At each iteration we remove at least half the nodes, so it terminates in $\log n$ iterations.
\end{proof}

This partition cannot actually be computed because we cannot properly update the degree sketches after removing $W_1$, but its existence allows us to prove that the same procedure works when using the lower degree sample $G_{a,r}'$.

Let $\Delta=\log(\gamma \log^2 n/\e^2)$. We need to use the samples $S^r_{a-\Delta}(u)$ for edges at level $a$.
We first bound the degrees in $G_{a-\Delta,r}'$:
\begin{lemma}\label{lm:low-degree}
Let $g^r_{(u, v)}$ be $\Theta(\log^3 n/\e^2)$-wise independent for fixed $u$, and independent for different $u$. Then the degree of all $u\in W_i$ in $G_{a-\Delta,r}'$ restricted to nodes $W_{i}\cup\ldots\cup W_t$ is at most $O(\log^3 n/\e^2)$ with high probability.
\end{lemma}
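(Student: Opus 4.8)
The plan is to fix $a$, $r$, and a supernode $w\in W_i$, express the degree in question as a sum $D_w=\sum_{e\in F}X_e$ of $0/1$ indicators over a set $F$ of at most $4\gamma\log n\,2^a$ edges that does not depend on the randomness $g^r$, show that the $X_e$ are $\Theta(\log^3 n/\e^2)$-wise independent, bound $\expect[D_w]=O(\log^3 n/\e^2)$, and conclude with a Chernoff-type tail bound for sums of limited-independence $0/1$ variables together with a union bound over supernodes and over the indices $(a,r)$.

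Set $p=2^{-(a-\Delta)}=(\gamma\log^2 n/\e^2)2^{-a}$, the sampling rate of $G'_{a-\Delta,r}$ in \eqref{eq:g-sample}. If $p\ge 1$ then $2^a\le\gamma\log^2 n/\e^2$, and since (as noted before the lemma) every edge of the graph $\hat G$ obtained by contracting the components of $\V_{a+1}$ is $\gamma\log n\,2^{a+1}$-weak, Lemma~\ref{lm:w-partition} bounds the degree of $w$ restricted to $W_i\cup\cdots\cup W_t$ by $4\gamma\log n\,2^a=O(\log^3 n/\e^2)$ deterministically, so assume $p<1$. Condition on all randomness other than $g^r$; on the high-probability event that $\hat G$ is $\gamma\log n\,2^{a+1}$-weak, the partition $W_1,\dots,W_t$ produced by Lemma~\ref{lm:w-partition} (with $k=\gamma\log n\,2^{a+1}$ and $t=\log n$) is then fixed, and for $w\in W_i$ the set $F$ of edges of $G$ crossing from the vertex set of $w$ to the vertex sets of the supernodes in $W_i\cup\cdots\cup W_t$ satisfies $|F|\le 2k=4\gamma\log n\,2^a$. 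Since $G'_{a-\Delta,r}$ is a subgraph of $G$, the degree to be bounded equals $D_w=\sum_{e\in F}X_e$ where, for $e=(x,y)$ and $\bar e=(y,x)$, we set $X_e=\mathbf{1}[\min\{g^r_e,g^r_{\bar e}\}<p]$.

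To bound $D_w$ we first establish that $\{X_e\}_{e\in F}$ is $\Theta(\log^3 n/\e^2)$-wise independent. Each $X_e$ depends only on $g^r_e$, which lies in a row indexed by a vertex of $w$, and $g^r_{\bar e}$, which lies in a row indexed by a vertex outside $w$. For any $J$ edges $e_1,\dots,e_J\in F$, the corresponding $2J$ entries of $g^r$ are pairwise distinct; a row indexed by a vertex of $w$ receives only entries of the first type, at most $J$ of them and pairwise distinct because $G$ is simple, and symmetrically for rows indexed by vertices outside $w$. Since entries in a common row are $\Theta(\log^3 n/\e^2)$-wise independent and entries in distinct rows are independent, for $J=\Theta(\log^3 n/\e^2)$ with a small enough constant these $2J$ entries---hence also $X_{e_1},\dots,X_{e_J}$---are mutually independent. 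As $\expect[X_e]\le 2p$ by a union bound, $\mu:=\expect[D_w]\le 4\gamma\log n\,2^a\cdot 2(\gamma\log^2 n/\e^2)2^{-a}=O(\log^3 n/\e^2)$, and a standard moment (Chernoff-type) bound for sums of $J$-wise independent $0/1$ variables with $J=\Theta(\mu)$ gives $\prob[D_w\ge 2\mu]\le e^{-\Omega(\mu)}=e^{-\Omega(\log^3 n/\e^2)}=n^{-\omega(1)}$. A union bound over the $O(n)$ supernodes $w$ and the $O(\log^2 n)$ choices of $(a,r)$ finishes the argument.

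The step I expect to be the main obstacle is the independence claim: because the supernode $w$ may be a contraction of many original vertices and an edge of $\hat G$ may originate at any of them, one has to verify that the up to $2k\gg J$ indicators $X_e$ are $J$-wise independent using only the per-row limited independence of $g^r$. The resolution is the observation that the two coordinates of each $X_e$ occupy a ``$w$-row'' and a ``non-$w$-row'', which, combined with the simplicity of $G$, keeps the number of relevant $g^r$-entries in any single row at most $J$; the block structure (limited independence within a row, full independence across rows) then upgrades to genuine $J$-wise independence of the $X_e$, at which point the tail bound applies verbatim.
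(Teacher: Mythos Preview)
Your argument is correct and essentially mirrors the paper's: both bound the sampled degree via the limited-independence Chernoff bound (the paper's Theorem~\ref{thm:ch}) and then take a union bound. The only packaging difference is that the paper upper-bounds the degree by the two directional sums $\sum_{v}\mathbf{1}[g^r_{(u,v)}<p]$ and $\sum_{v}\mathbf{1}[g^r_{(v,u)}<p]$ and handles them separately (asserting the second is a sum of fully independent indicators and applying Theorem~\ref{thm:ch} to the first), whereas you keep the combined indicator $X_e=\mathbf{1}[\min\{g^r_e,g^r_{\bar e}\}<p]$ and argue $J$-wise independence of the $X_e$ directly from the row block structure of $g^r$. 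Your handling of the supernode case is in fact more careful than the paper's terse proof, whose claim that the second sum is fully independent is only literally correct when $u$ is a single original vertex.

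One small slip to fix: writing $e^{-\Omega(\mu)}=e^{-\Omega(\log^3 n/\e^2)}$ presupposes $\mu=\Omega(\log^3 n/\e^2)$, which you have only shown as an upper bound. The remedy is to apply the tail bound at a fixed threshold $T=\Theta(\log^3 n/\e^2)$ chosen so that $T\ge 2\mu$; then $\prob[D_w\ge T]\le e^{-\Omega(T-\mu)}=e^{-\Omega(\log^3 n/\e^2)}$ regardless of how small $\mu$ is, and the required $(T-\mu)$-wise independence is still within your $\Theta(\log^3 n/\e^2)$ budget.
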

\begin{proof}
Consider a vertex $u\in W_i$ and let $N(u)$ denote the neighbors of $u$ in $W_{i}\cup\ldots\cup W_t$ in the full graph $G$. The size of its sampled neighborhood is bounded by 
\[
\sum_{v\in N(u)} \mathbf{1}_{g^r_{(u, v)}<\gamma \log^2 n/(\e^2 2^a)}+\sum_{v\in N(u)} \mathbf{1}_{g^r_{(v, u)}<\gamma \log^2 n/(\e^2 2^a)}
\]
The number of terms is $O(\log n 2^a)$.
The second sum consists of independent random variables, so standard Chernoff bounds apply. Concentration bounds from Theorem~\ref{thm:ch} apply to the first sum since they are sufficiently independent for expectation $O(\log^3 n/\e^2)$.
\end{proof}

Observe that if a node $u$ satisfies the bound in Lemma~\ref{lm:low-degree}, then the sketch $S^r_{a-\Delta}(u)$ can be decoded using Theorem~\ref{thm:k-sparse}.  Let $U_1$ be the set of decodable nodes.
We would like to argue that we can simply output a decoded edge $(u, v)$ as part of the sparsifier if and only if $g^r_{(u, v)}<\gamma \log^2 n/(\e^2 2^{a})$ and $v$ belongs to the same connected component as $u$ in $\V_a$.
Then, using the linearity of the sketches, we could subtract decoded edges of the form $(u, v), u\in U_1, v\in V \setminus U_1$ from the sketches $S(v)$ and $d(v)$, effectively removing $U_1$ from the graph, and then move on to $U_2$.

However, for technical reasons to avoid dependencies and ensure the algorithm works in small space, we cannot reuse the variables $g^r_{(u,v)}$. We need to calculate $U_2$ using the independent sketches $S^{r+1}$ and $d^{r+1}$ as opposed to $S^{r}$ and $d^r$ to avoid dependencies and also use the variables $g^*_{(u,v)}$ for the actual samples. The size of $r$ will remain bounded since we will prove the process terminates in $O(\log n)$ steps, but switching to $S^{r+1}$ introduces additional complications because to remove a vertex $u \in U_1$ from the graph, we must be able to recover the edges from $S^{r+1}_{a-\Delta}(u), \ldots, S^{O(\log n)}_{a-\Delta}(u)$. The following lemma accomplishes this:

\begin{lemma}
\label{lm:sr-decode}
Let $g^r_{(u, v)}$ be $\Theta(\log^3 n/\e^2)$-wise independent for fixed $u$, and independent for different $u$, and suppose the degree of $u$ in $G_{a-\Delta,r}'$ is at most $\alpha\log^3 n/\e^2$ where $\alpha$ is the constant from Lemma~\ref{lm:low-degree}.
Then the degree of $u$ in $G_{a-\Delta,r'}'$ is $O(\log^3 n/\e^2)$ for all $r' \ge r$ with high probability.
\end{lemma}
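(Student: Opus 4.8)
The plan is to compare the degree of $u$ in each sample $G'_{a-\Delta,r'}$, $r'\ge r$, to its true degree $d$ in $G$. Write $p=\gamma\log^2 n/(\e^2 2^a)=2^{-(a-\Delta)}$ for the level-$(a-\Delta)$ sampling rate, and let $d_{r'}$ denote the degree of $u$ in $G'_{a-\Delta,r'}$, so that $d_{r'}=\sum_{v\in N_G(u)}\mathbf{1}[\min\{g^{r'}_{(u,v)},g^{r'}_{(v,u)}\}<p]$ and hence $dp\le\expect[d_{r'}]\le 2dp$. Exactly as in the proof of Lemma~\ref{lm:low-degree}, I would control $d_{r'}$ from above via $d_{r'}\le\sum_{v}\mathbf{1}[g^{r'}_{(u,v)}<p]+\sum_{v}\mathbf{1}[g^{r'}_{(v,u)}<p]$, where the terms of the first sum are $\Theta(\log^3 n/\e^2)$-wise independent (so Theorem~\ref{thm:ch} applies at expectation $O(\log^3 n/\e^2)$) and those of the second sum are fully independent (so an ordinary Chernoff bound applies). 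For the lower tail I would use $d_r\ge\sum_{v\in N_G(u)}\mathbf{1}[g^r_{(v,u)}<p]$, again a sum of fully independent indicators, with mean exactly $dp$.

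The argument is then a case analysis on the size of $dp$ relative to a threshold $T=\Theta(\log^3 n/\e^2)$, chosen to be a large enough constant multiple of the constant $\alpha$ of Lemma~\ref{lm:low-degree}. If $dp\le T/(3e)$, then for each fixed $r'\ge r$ both of the two sums bounding $d_{r'}$ have expectation at most $T/(3e)$, so the concentration bounds of Lemma~\ref{lm:low-degree} give $d_{r'}\le 2T=O(\log^3 n/\e^2)$ except with probability $n^{-c}$; a union bound over the $O(\log n)$ indices $r'\ge r$ keeps the total failure probability polynomially small. If instead $dp>T/(3e)$, then since $T$ was chosen so that $T/(3e)\ge 2\alpha\log^3 n/\e^2$, the multiplicative Chernoff lower tail applied to $\sum_{v}\mathbf{1}[g^r_{(v,u)}<p]$ yields $d_r>dp/2>\alpha\log^3 n/\e^2$ except with probability $e^{-\Omega(dp)}\le n^{-c}$; that is, in this regime the hypothesis $d_r\le\alpha\log^3 n/\e^2$ is itself violated with high probability.

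Putting the two cases together, the bad event ``$d_r\le\alpha\log^3 n/\e^2$ yet $d_{r'}>2T$ for some $r'\ge r$'' is contained in the union of the failure event of the first case (probability $O(\log n)\cdot n^{-c}$) and that of the second case (probability $n^{-c}$), and therefore has probability $n^{-c+o(1)}$. This is exactly the statement of the lemma, with the constant hidden in $O(\log^3 n/\e^2)$ equal to $2T/(\log^3 n/\e^2)$.

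I expect the only genuinely delicate point to be the logic of the case split rather than any calculation: one cannot simply condition on the hypothesis $d_r\le\alpha\log^3 n/\e^2$ and then run a fresh Chernoff bound on the copies $r'>r$, because the hypothesis may itself be a rare event; the two-sided analysis sidesteps this by handling the ``$u$ has genuinely large degree'' regime through the failure of the hypothesis rather than through the behaviour of the other copies. The remaining ingredient — that $\Theta(\log^3 n/\e^2)$-wise independence of $\{g^{r'}_{(u,v)}\}_v$ suffices to invoke Theorem~\ref{thm:ch} at expectation $\Theta(\log^3 n/\e^2)$, which exceeds $c\log n$ for any constant $c$ — is identical to what is already used in Lemma~\ref{lm:low-degree}.
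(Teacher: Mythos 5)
Your proof is correct and follows essentially the same route as the paper: the paper infers from the hypothesis that the true degree of $u$ in $G$ is at most $2(\alpha/\gamma)\log n\, 2^a$ with high probability (which is exactly the contrapositive of your Case 2 lower-tail argument) and then reapplies Theorem~\ref{thm:ch} to each $r'$ with a union bound (your Case 1). Your explicit two-case formulation just makes the paper's implicit handling of the conditioning issue more transparent; no substantive difference.
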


\begin{proof}
If the degree of $u$ in $G_{a-\Delta,r}'$ is at most $\alpha\log^3 n/\e^2$, we expect the degree in $G$ to be at most $(\alpha/\gamma)\log n 2^a$, and concentration inequalities for sampling with limited independence (Theorem~\ref{thm:ch}) show that with high probability its degree is at most, say, $2(\alpha/\gamma)\log n 2^a$.  Applying Theorem~\ref{thm:ch} again shows $u$'s degree in $G_{a-\Delta,r'}'$ is $O(\log^3 n/\e^2)$ for any $r'$, and taking a union bound over all $r'$ finishes the proof.
\end{proof}

We now state the algorithm formally. For each $a=1,\ldots, O(\log n)$ we denote the graph obtained by contracting all connected components in $\V_{a+1}$ into supernodes by $H_a$.
 
\begin{algorithm}[H]\label{alg:partition}
\caption{PARTITION(a)}
\begin{algorithmic}[1]
\STATE Let $H^1_{a} \leftarrow H_a$.
\FOR {$r\leftarrow 1$ to $O(\log n)$}
\STATE Estimate the degree of each $v \in H^r_{a}$ from sketches $d^r_{a-\Delta}(v)$
\STATE $U^r_{a}\leftarrow \{v\in H^r_{a}| d^r_{a-\Delta}(v) \leq 4\alpha\log^3 n/\e^2\}$ 
\FOR {$u\in U^r_{a}$, $j=r+1,\ldots, O(\log n)$}
\STATE Run sparse recovery on $S^{j}_{a-\Delta}(u)$ \label{alg:partition:recovery}
\STATE For all edges $(u,v)$ recovered from $S^{j}_{a-\Delta}(u)$, subtract $(u,v)$ from $S^{j}_{a-\Delta}(v)$ and $d^{j}_{a-\Delta}(v)$ 
\ENDFOR
\STATE $H^{r+1}_{a}\leftarrow H^r_{a}\setminus U^r_{a}$.
\ENDFOR
\RETURN $\{U^r_{a}\}_{r=1,\ldots, O(\log n)}$
\end{algorithmic}
\end{algorithm}
Here $\gamma$ is a constant such that sampling the edges of a $k$-connected graph at rate $\gamma \log n/k$ produces a connected subgraph with probability at least $1-n^{-10}$, and $\alpha$ is a constant bounding the degree in Lemma \ref{lm:low-degree}.

%

We first prove
\begin{lemma}\label{lm:partition}
For all $a=1, \ldots, O(\log n)$, Algorithm~\ref{alg:partition} recovers a partition of $H_a$ such that for all $r=1,\ldots, O(\log n)$ for each $u\in U^r_{a}$ the degree of $u$ in $U^{r+1}_{a}\cup\ldots\cup U^{O(\log n)}_{a}$ in graph $G^*_{a-\Delta}$ is $O(\log^3 n/\e^2)$ with high probability.
\end{lemma}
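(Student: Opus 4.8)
The plan is to argue by induction on the round counter $r$, comparing the sets $U^r_a$ produced by the algorithm with the ``ideal'' partition $W_1,\ldots,W_t$ of $H_a$ guaranteed by Lemma~\ref{lm:w-partition}. Recall that every edge of $H_a$ is $k$-weak for $k=\gamma\log n\cdot 2^{a+1}=\Theta(\log n\cdot 2^a)$, so Lemma~\ref{lm:w-partition} gives sets $W_1,\ldots,W_t$ with $t=\log n$ such that the degree of each $v\in W_i$ inside $W_i\cup\cdots\cup W_t$ is at most $2k=\Theta(\log n\cdot 2^a)$. Since the greedy process defining the $W_i$ depends only on $G$, it is independent of all the variables $g,g^*,h$ and of the internal randomness of the sketches, so Lemma~\ref{lm:low-degree} and Lemma~\ref{lm:sr-decode} apply verbatim to this partition. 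The inductive invariant at the start of round $r$ will be: (i) $W_1\cup\cdots\cup W_{r-1}\subseteq U^1_a\cup\cdots\cup U^{r-1}_a$, so that $H^r_a\subseteq W_r\cup\cdots\cup W_t$ as vertex sets; and (ii) for every $j\ge r$ the currently stored sketches $S^j_{a-\Delta}(v)$ and $d^j_{a-\Delta}(v)$ represent exactly the restriction of $A_{G'_{a-\Delta,j}}$ to the vertex set $H^r_a$, i.e.\ all edges incident to $U^1_a\cup\cdots\cup U^{r-1}_a$ have been correctly subtracted off. Disjointness of the $U^r_a$ is immediate since $U^r_a\subseteq H^r_a=H_a\setminus(U^1_a\cup\cdots\cup U^{r-1}_a)$.

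For the inductive step, first take $u\in W_r$. By invariant (i), $N(u)\cap H^r_a\subseteq N(u)\cap(W_r\cup\cdots\cup W_t)$, so by Lemma~\ref{lm:low-degree} the degree of $u$ in $G'_{a-\Delta,r}$ restricted to $H^r_a$ is at most $\alpha\log^3 n/\e^2$ with high probability; by invariant (ii) the stored $d^r_{a-\Delta}(u)$ is an $\ell_1$-estimate of exactly this quantity, so by Theorem~\ref{thm:degree-sketch} it is at most $(1+\e)\alpha\log^3 n/\e^2\le 4\alpha\log^3 n/\e^2$, and the algorithm puts $u$ in $U^r_a$. Hence $W_r\subseteq U^1_a\cup\cdots\cup U^r_a$, preserving invariant (i). Since $t=\log n$ and the algorithm runs $O(\log n)\ge t$ rounds, $H^{t+1}_a=\emptyset$, which shows the $\{U^r_a\}$ form a partition of $H_a$ and moreover that $U^{r+1}_a\cup\cdots\cup U^{O(\log n)}_a=H^{r+1}_a$.

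Next, round $r$ maintains invariant (ii). For any $u\in U^r_a$ (whether or not $u\in W_r$), the condition $d^r_{a-\Delta}(u)\le 4\alpha\log^3 n/\e^2$ together with Theorem~\ref{thm:degree-sketch} shows that the true degree of $u$ in $G'_{a-\Delta,r}$ restricted to $H^r_a$ is $O(\log^3 n/\e^2)$; after enlarging the constant in the hypothesis of Lemma~\ref{lm:sr-decode} (an innocuous change, as its proof only loses constant factors) this implies the degree of $u$ in $G'_{a-\Delta,j}$ restricted to $H^r_a$ is $O(\log^3 n/\e^2)$ for every $j\ge r$. By invariant (ii) the stored $S^j_{a-\Delta}(u)$ is a sketch of this $O(\log^3 n/\e^2)$-sparse vector, so for $k=O(\log^3 n/\e^2)$ Theorem~\ref{thm:k-sparse} recovers exactly the edges from $u$ to $H^r_a$ in $G'_{a-\Delta,j}$; subtracting them from the sketches of the other endpoints, as the algorithm does, converts the restriction to $H^r_a$ into the restriction to $H^{r+1}_a=H^r_a\setminus U^r_a$, establishing invariant (ii) for round $r+1$.

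It remains to bound the degree of $u\in U^r_a$ in $G^*_{a-\Delta}$ inside $U^{r+1}_a\cup\cdots\cup U^{O(\log n)}_a=H^{r+1}_a\subseteq H^r_a$. Condition on $H^r_a$; this fixes the neighborhood $N(u)\cap H^r_a$ in $G$ using only $g^1,\ldots,g^{r-1}$ and the sketch randomness, all independent of $g^*$ and of $g^r$. As shown above, the sample of $N(u)\cap H^r_a$ drawn by $g^r$ has size $O(\log^3 n/\e^2)$; since the sampling rate is $\gamma\log^2 n/(\e^2 2^a)$, the same reverse-Chernoff argument used in the proof of Lemma~\ref{lm:sr-decode} gives $|N(u)\cap H^r_a|=O(\log n\cdot 2^a)$ with high probability, and then a fresh Chernoff bound for the independent sample drawn by $g^*$ shows the degree of $u$ in $G^*_{a-\Delta}$ restricted to $H^r_a$ is $O(\log^3 n/\e^2)$ with high probability. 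A union bound over all $a,r,j\le O(\log n)$ and all $u\in V$, together with the high-probability guarantees of Lemmas~\ref{lm:low-degree} and~\ref{lm:sr-decode} and Theorems~\ref{thm:degree-sketch} and~\ref{thm:k-sparse}, completes the proof. The main obstacle is precisely the dependency bookkeeping of the last two paragraphs: one must ensure that at each stage the fresh randomness being used ($g^r$ when forming $U^r_a$, $g^*$ when bounding the $G^*$-degree) is independent of the history that determines $H^r_a$, and that invariant (ii) is never violated, since a single failed sparse recovery would corrupt every subsequent round.
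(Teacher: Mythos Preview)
Your proof is correct and follows essentially the same inductive scheme as the paper's: compare the algorithm's sets $U^r_a$ to the ideal partition $W_1,\ldots,W_t$ of Lemma~\ref{lm:w-partition}, use Lemma~\ref{lm:low-degree} to force $W_r\subseteq U^1_a\cup\cdots\cup U^r_a$, use Lemma~\ref{lm:sr-decode} to guarantee all subsequent sparse recoveries succeed, and exploit the independence of $g^r,g^*$ from the history to recurse. One small slip: the partition $W_1,\ldots,W_t$ is a partition of $H_a$, which is obtained by contracting components of $\V_{a+1}$ and hence depends on the $h$-variables, not ``only on $G$''; this is harmless, since all you actually use is independence from $g$ and $g^*$.
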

\begin{proof}

We first prove that the constructed sets cover all of  $H_a$. Consider the set $U_1$.  By Lemma~\ref{lm:low-degree}, all $u \in W_1$ have degree $O(\log^3 n/\e^2)$ in $G_{a-\Delta, 1}'$ with high probability, so removing all nodes with degree at most $4\alpha\log^3 n/\e^2$ for large enough $\alpha$ will include all $u \in W_1$.  Lemma~\ref{lm:sr-decode} implies that the sparse recovery in line \ref{alg:partition:recovery} will succeed for all $j$ with high probability, so we can completely remove $U^r_a$ from the graph. Lemma~\ref{lm:sr-decode} also bounds the degree of $u \in U^r_a$ in graph $G^*_{a-\Delta}$, by replacing $G_{a-\Delta,r'}'$ with $G^*_{a-\Delta}$ in the statement of the lemma.

We now note that the identity of the set $U^r_{a}$ is independent of the randomness used for samples and sketches $S_{a-\Delta}^j$, $d_{a-\Delta}^j$, $j=r+1,\ldots, O(\log n)$. Thus, the same bounds on node degrees follow by a recursive application of the argument to $H_a\setminus U^1_a$. Furthermore, it follows by induction on $r$ that after removing $U^1_a,\ldots, U^r_a$ we have removed all of $W_1,\ldots, W_r$ with high probability, so the algorithm terminates in $O(\log n)$ iterations.
\end{proof}

Given the partition $U^1_a,\ldots, U^{O(\log n)}_a$, the algorithm for recovering the edges of the sparsifier is as follows:

\begin{algorithm}[H]\label{alg:rec-edges}
\caption{RECOVER(a)}
\begin{algorithmic}[1]
\FOR {$r=1,\ldots, O(\log n)$, $u\in U^r_{a}$}
\STATE Run sparse recovery on $S^*_{a-\Delta}(u)$
\STATE Output each recovered edge $(u, v), u\in U_r$ only if $g^*_{(u, v)}<\gamma \log^2 n/(\e^2 2^{a})$ and $L(u, v)=a$.
\STATE Subtract  recovered edges from $S^*_{a-\Delta}(v)$ for all $v\in U_{a, r+1}\cup \ldots U_{a, O(\log n)}$.
\ENDFOR
\end{algorithmic}
\end{algorithm}

We can now prove
\begin{theorem}
For each $v\in V(G)$ Algorithm~\ref{alg:rec-edges} recovers a sample of edges incident on  $v$, where edges are picked with probability $\gamma \log^2 n/(\e^2 2^{L(v)})$.
\end{theorem}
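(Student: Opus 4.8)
The plan is to fix one sampling rate $a$ and prove that $\mathrm{RECOVER}(a)$, run on the graph $H_a$ obtained by contracting the components of $\V_{a+1}$, recovers for every supernode $v$ of $H_a$ exactly the level-$a$ edges incident on $v$ that survive the $g^*$-test, and that each such edge survives with probability $\gamma\log^2 n/(\e^2 2^{a})$. Since $2^{-(a-\Delta)}=\gamma\log^2 n/(\e^2 2^{a})$ by the definition of $\Delta$, the sample $G^*_{a-\Delta}$ is precisely the set of edges surviving $g^*$-sampling at that rate, and a vertex $v$ with $L(v)=a$ is its own singleton supernode in $H_a$, so the level-$a$ edges incident on it are exactly the edges the sparsifier must draw from $v$ (cf.\ the discussion preceding Lemma~\ref{lm:w-partition}). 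Running $\mathrm{RECOVER}(a)$ for $a=1,\dots,O(\log n)$ then recovers, for every $v$, a sample of the edges at $v$ in which an edge $e$ is kept with probability $\gamma\log^2 n/(\e^2 2^{L(e)})$; this equals $\gamma\log^2 n/(\e^2 2^{L(v)})$ on the edges at $v$'s own level and is larger on the rest, i.e.\ only oversampling relative to $L(v)$, which is harmless by Corollary~\ref{cor:oversampling}.

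The first and main step is to show that every call to sparse recovery inside $\mathrm{RECOVER}$ succeeds. When the algorithm reaches $u\in U^r_a$, the subtraction steps of the earlier iterations together with the linearity of the sketch have turned $S^*_{a-\Delta}(u)$ into an exact sketch of the restriction of row $u$ of $A_{G^*_{a-\Delta}}$ to the edges joining $u$'s supernode to supernodes in $U^r_a\cup\cdots\cup U^{O(\log n)}_a$; the number of these edges is the degree of $u$ in $G^*_{a-\Delta}$ among the not-yet-removed supernodes, which is $O(\log^3 n/\e^2)$ with high probability by Lemma~\ref{lm:partition} (its proof bounds the degree of $u$ in $G^*_{a-\Delta}$ toward all remaining supernodes, and the probability is over the randomness of $g^*$, which is independent of the $U^r_a$'s). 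Since this matches the sparsity $k$ for which the $S^*$ sketches were built, Theorem~\ref{thm:k-sparse} decodes them exactly; a union bound over the $O(n\log n)$ pairs $(v,a)$ keeps the total failure probability polynomially small. Exact decoding means that subtracting the recovered edges from the sketches of all later-processed supernodes keeps every subsequent residual sketch exact, so the induction on $r$ closes and every level-$a$ edge of $G^*_{a-\Delta}$ is recovered from whichever of its two supernodes is processed first.

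It remains to identify the inclusion probabilities and rule out double output. An edge $e=(u,v)$ with $L(u,v)=a$ joins two distinct supernodes of $H_a$; it is recoverable during $\mathrm{RECOVER}(a')$ for every $a'\ge a$, but the $L(u,v)=a$ test causes it to be \emph{output} only during $\mathrm{RECOVER}(a)$. Fixing the ($g^*$-independent) orientation under which $e$ is first peeled, $e$ is output iff $g^*_{(u,v)}<\gamma\log^2 n/(\e^2 2^{a})$; since $g^*_{(u,v)}$ is uniform on $[0,1]$ and independent of the connectivity structure and of the sets $U^r_a$, this has probability $\min\{\gamma\log^2 n/(\e^2 2^{a}),1\}$, and the event implies $e\in G^*_{a-\Delta}$, so $e$ is available for recovery exactly when it ought to be output. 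Distinct edges are tested against $g^*$ at distinct ordered pairs, so in the free-randomness model these inclusion events are mutually independent, as required to invoke Lemma~\ref{lm:lvl-sampling}. Finally, to guarantee each level-$a$ edge is output at most once it suffices to fix an arbitrary processing order within each $U^r_a$ and, in the subtraction step, remove recovered edges from \emph{every} later-processed supernode, including later ones inside the same $U^r_a$. Assembling the per-$a$ statement over all $a$ and $v$ gives the theorem.

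Given Lemma~\ref{lm:partition}, the argument is mostly careful bookkeeping, and the point most in need of care is the second step: one must track precisely what each residual sketch handed to sparse recovery represents---only the edges from the current supernode to the not-yet-removed supernodes---so that the $O(\log^3 n/\e^2)$ degree bound of Lemma~\ref{lm:partition} actually certifies its sparsity, and one must ensure recovered edges are subtracted from all later-processed supernodes so that no level-$a$ edge is output twice; the union bound over the $O(\log n)$ levels and all supernodes then closes the argument.
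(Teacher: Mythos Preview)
Your proof is correct and follows the same approach as the paper's own (very terse) proof: sparse recovery succeeds by the degree bound of Lemma~\ref{lm:partition}, and the partition $U^1_a,\ldots,U^{O(\log n)}_a$ assigns to each edge a single ordered pair $(u,v)$ and hence a single random variable $g^*_{(u,v)}$, so the inclusion probability is $\gamma\log^2 n/(\e^2 2^{a})$ as required. You supply considerably more bookkeeping than the paper does---tracking residual sketches under subtraction, handling both endpoints landing in the same $U^r_a$, and reconciling the $L(v)$ in the statement with the $L(e)$ that actually governs the sampling rate---but the underlying argument is the same.
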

\begin{proof}

\if 0
With high probability we contract all $\gamma\log n 2^{a+1}$-strongly connected components, since they are connected  in $\V_{a+1}$, so all remaining edges are $\gamma\log n 2^{a+1}$-weak and by Lemma \ref{lm:k-weak} have average degree at most $\gamma\log n 2^{a+1}$. If we could compute $W_1, \ldots, W_r$ we would remove half the nodes at each iteration, proving that $r = O(\log n)$. 
\fi 
Note that sparse recovery succeeds whp by the degree bound in Lemma~\ref{lm:partition}. 
Finally, note that the structure of the partition $U_1\cup\ldots\cup U_r$ maps to each edge a single random variable $g_{(u, v)}$, so the probability of an edge being sampled is correct.
\end{proof}

We will need the following definition in Section~\ref{sec:lim-indep}:
\begin{definition}\label{def:control}
An edge $e=(u, v)\in E$ is controlled by a vertex $u\in V$ if $e$ is sampled using $g^*_{(u, v)}$.  We denote the set of edges controlled by $u$ by $E_u$.
\end{definition} 

We will also need
\begin{lemma}\label{lm:control-small}
Let $E^*$ be a set of edges.
For each $u\in V$ one has $\expect[|E^*\cap E_u|]=O(\log^4 n/\e^2)$.
\end{lemma}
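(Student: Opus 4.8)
The plan is to bound the expected number of edges controlled by a fixed vertex $u$, i.e. $\expect[|E_u|]$, since $|E^*\cap E_u|\le |E_u|$ always. An edge $(u,v)$ is controlled by $u$ exactly when it is sampled via the variable $g^*_{(u,v)}$ rather than $g^*_{(v,u)}$, which by the structure of the RECOVER/PARTITION procedures happens precisely when $u$ is the endpoint that is removed first, i.e. $u\in U^r_a$ for $r<r'$ where $v\in U^{r'}_a$ (or $u$ and $v$ are removed in the same round, broken consistently). In particular, if $(u,v)$ is controlled by $u$ and $L(u,v)=a$, then $(u,v)$ is recovered from $S^*_{a-\Delta}(u)$, so it survives the sparse recovery on $u$, which means $u$'s degree in $G^*_{a-\Delta}$ restricted to the nodes that are removed no earlier than $u$ is at most $O(\log^3 n/\e^2)$ by Lemma~\ref{lm:partition}. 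Hence the number of edges of level exactly $a$ controlled by $u$ is $O(\log^3 n/\e^2)$, deterministically given the good event of Lemma~\ref{lm:partition}.

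First I would make precise that every edge controlled by $u$ has some well-defined level $L(u,v)\in\{1,\ldots,O(\log n)\}$, and that for each fixed level $a$ the set of $a$-level edges controlled by $u$ is a subset of the edges recovered from $S^*_{a-\Delta}(u)$ in Algorithm~\ref{alg:rec-edges} line~2, whose count is $O(\log^3 n/\e^2)$ by Lemma~\ref{lm:partition} (applied with $G^*_{a-\Delta}$ in place of $G'_{a-\Delta,r'}$, exactly as noted in the proof of that lemma). Summing over the $O(\log n)$ possible levels gives $|E_u| = O(\log^4 n/\e^2)$ on the high-probability event $\mathcal{E}$ that all the degree bounds of Lemma~\ref{lm:partition} hold. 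Then I would handle the low-probability complement: $\prob[\overline{\mathcal{E}}]\le n^{-c}$ for a large constant $c$, and on $\overline{\mathcal{E}}$ we can bound $|E_u|\le |E|\le n^2$ trivially, so $\expect[|E_u|\cdot \mathbf{1}_{\overline{\mathcal{E}}}]\le n^{2-c} = o(1)$. Combining, $\expect[|E^*\cap E_u|]\le \expect[|E_u|] = O(\log^4 n/\e^2) + o(1) = O(\log^4 n/\e^2)$.

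The main obstacle I anticipate is being careful about what "controlled by $u$" forces about the sparse recovery: I need the implication "$(u,v)$ controlled by $u$ and $L(u,v)=a$" $\Rightarrow$ "$(u,v)$ appears in the decoding of $S^*_{a-\Delta}(u)$", which requires that $u$ is processed (i.e.\ $u\in U^r_a$ for some $r$) and that at the time $u$ is processed, $v$ is still present so the edge is attributed to $u$ — this is exactly the content of Definition~\ref{def:control} together with the order in which Algorithm~\ref{alg:rec-edges} subtracts recovered edges. Once that is pinned down, the degree bound of Lemma~\ref{lm:partition} does all the work, and the rest is the routine union-bound/trivial-bound argument for the bad event. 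A minor point to get right is that $\Delta = \log(\gamma\log^2 n/\e^2) = O(\log\log n + \log(1/\e))$, so $a-\Delta$ ranges over $O(\log n)$ values just like $a$, and the extra $\log^2 n/\e^2$ factor hidden in the sampling rate $2^{-(a-\Delta)}$ versus $2^{-a}$ is already accounted for inside the $O(\log^3 n/\e^2)$ degree bound of Lemma~\ref{lm:partition}; it does not introduce a further factor.
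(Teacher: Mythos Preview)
Your proposal is correct and follows essentially the same approach as the paper: for each level $a$, use the degree bound of Lemma~\ref{lm:partition} on $G^*_{a-\Delta}$ to cap the number of level-$a$ edges controlled by $u$ at $O(\log^3 n/\e^2)$, then sum over the $O(\log n)$ levels to get $O(\log^4 n/\e^2)$. The paper's proof phrases the per-level step as ``$u$ controls $O(\log^3 n\,2^a/\e^2)$ edges at level $a$'' in the unsampled graph and then multiplies by the sampling rate, but this is the same computation (and in fact your direct use of the $G^*_{a-\Delta}$ degree bound avoids a typo in that intermediate count). Your explicit handling of the low-probability failure event of Lemma~\ref{lm:partition} via the trivial $n^2$ bound is a detail the paper omits but which changes nothing.
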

\begin{proof}
Consider a vertex $u\in V$.  By Lemma~\ref{lm:partition} $u$ controls $O(\log^3 n 2^a/\e^2)$ edges at level $a$. Hence, the expected number of edges sampled at each level $a$ is $O(\log^3 n/\e^2)$. Hence, the expected number of edges controlled by $u$ across all levels is $O(\log^4 n/\e^2)$.
\end{proof}

\subsection{Runtime}

We now briefly summarize the time required to update the sketches and to construct a sparsifier. We do not optimize the log factors but only show updates require $\tilde{O}(1/\e^2)$ time and building a sparsifier requires $\tilde{O}(n/\e^2)$.  Each addition of deletion of an edge requires updating $C^b_a$, $S^r_a$ and $d^r_a$ for $a,b,r \le O(\log n)$. The sketches $C^b_a$ are built using $\ell_0$-samplers (see \cite{agm}) and can be updated in $\tilde{O}(1)$ time. For an edge $(u,v)$ we update all $O(\log n)$ copies of $S(u)$, $S(v)$, $d(u)$ and $d(v)$. By Theorems \ref{thm:k-sparse} and \ref{thm:degree-sketch}, these can each be updated in $\tilde{O}(1/\e^2)$ time. For $S(u)$ this is done by querying only the $\tilde{O}(1/\e^2)$ entries of the matrix $T$ we need.

Construction of the sparsifier is more expensive. If we query the sparsifier after each graph update it may need to be recomputed from scratch each time due to edge deletions, so we cannot amortize its cost across the updates. However, we will show it requires only $\tilde{O}(n/\e^2)$ time. Building all $O(\log n)$ $\V_a$ requires $\tilde{O}(n)$ operations each if $\ell_0$-sampling is done efficiently.  

Running one iteration of Algorithm~\ref{alg:partition} requires $O(n)$ estimations of $d$, $O(n)$ decodings of $S$, $\tilde{O}(k)$ updates to $S$ and $d$ for each of the $O(n)$ sparse recoveries and $O(n)$ additional bookkeeping. Since $S$ is $\tilde{O}(1/\e^2)$-sparse by Theorem~\ref{thm:k-sparse} decoding takes $\tilde{O}(1/\e^2)$ time. Summing over $\tilde{O}(1)$ values of $r$ and $a$, we use a total of $\tilde{O}(n/\e^2)$ time. Algorithm~\ref{alg:rec-edges} also does $O(n)$ sparse recoveries and $\tilde{O}(n/\e^2)$ updates to $S^*$ per iteration, which also totals to $\tilde{O}(n/\e^2)$ summing over all $r$ and $a$.

\subsection{Weighted graphs}
We note that even though we stated the algorithm for unweighted graphs, the following simple reduction yields a single pass dynamic sparsifier for weighted graphs, as long as when an edge is removed or updated, its weight is given together with the identity of its endpoints. Suppose that edge weights are integers between $1$ and $W$ (the general case can be reduced to this one with appropriate scaling and rounding).
Consider graphs $G_0,\ldots,G_{\log_2 W}$, where an edge $e=(u, v)$ belongs to the edge set of $G_b$ iff the binary expansion of $w_e$ has $1$ in position $b$, for $b=0,\ldots, \log_2 W$. Note that in order to preserve cuts in $G$ to a multiplicative factor of $1\pm \e$, it is sufficient to preserve cuts in each of $G_0,\ldots, G_b$ to the same factor. To do that, it is sufficient to maintain $\log_2 W$ copies of our algorithm operating on the graphs $G_b$ (this is feasible due to the assumption that edges are either added or completely removed, i.e.\ the weight of the removed edge  is given at the time of removal).
The space used and the number of edges in the sparsifier will both increase by a factor of $\log_2 W$.

\section{Sparsification with limited independence}\label{sec:lim-indep}
In this section we remove the assumption that the algorithm has access to $\tilde{\Theta}(n^2)$ bits of randomness by using  sampling with limited independence. We prove the following two statements. First, we show in Lemma~\ref{lm:t-wise-conn} that sampling edges of a $k$-connected graph at rate $\gamma \log n/k$  yields a connected graph with high probability, {\em even with limited independence}.  In particular, it is sufficient to ensure that random variables used for sampling edges incident to any given vertex are only $\tilde O(1/\e^2)$-wise independent.  This lemma is used in Section~\ref{sec:unlim-indep} to show that our estimation of sampling rates is accurate. We then show that our algorithm for constructing a sparsifier by sampling at rates proportional to edge connectivities yields a sparsifier with high probability even when the sampling is done using limited independence. We note that the second claim does not subsume the first due an extra $\log n$ factor that is needed for sampling with edge connectivities to go through. 

We will use tail bounds for $t$-wise independent random variables proved in \cite{SSS95}, Theorem 5:
\begin{theorem}\label{thm:ch}
Let $X_1,\ldots, X_n$ be   random variables each of which is confined to $[0, 1]$. Let $X=\sum_{i=1}^n X_i, \mu=\expect[X]$. Let $p=\mu/n$, and suppose that $p\leq 1/2$. Then if $X_i$ are $\lceil \e \mu \rceil$-wise independent, then 
$$
\prob[|X-\mu|\geq \e \mu]<e^{-\lfloor \e^2 \mu/3\rfloor}, 
$$
if $\e<1$, and 
$$
\prob[|X-\mu|\geq \e \mu]<e^{-\e \ln(1+\e) \mu/2}<e^{-\e \mu/3} 
$$
otherwise.
\end{theorem}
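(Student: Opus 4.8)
The plan is to prove Theorem~\ref{thm:ch} by the method of bounded moments, the standard device for concentration under limited independence. Set $Y_i = X_i - \expect[X_i]$, so that $\expect[Y_i]=0$ and $|Y_i|\le 1$, and let $Y = \sum_{i=1}^n Y_i = X-\mu$. Fix an even integer $k$ with $2 \le k \le \lceil \e\mu\rceil$ (if $\lceil \e\mu\rceil$ is odd we use $k = \lceil\e\mu\rceil - 1$; the loss is hidden by the floors in the statement). Expanding,
\[
\expect\!\left[Y^k\right] = \sum_{i_1,\dots,i_k\in[n]}\expect\!\left[Y_{i_1}\cdots Y_{i_k}\right],
\]
and the key observation is that each monomial is a product of at most $k$ of the $Y_i$, so $k$-wise independence lets it factor over its distinct indices exactly as it would under full independence; moreover every monomial in which some index occurs with multiplicity one vanishes since $\expect[Y_i]=0$. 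Hence $\expect[Y^k]$ is exactly the value it would take if the $X_i$ were fully independent, so it suffices to reproduce the usual independent-case moment estimate.

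For that estimate I would group the $k$ positions into blocks of equal index, each of size at least $2$; a block of size $b$ contributes $\sum_i \expect[|Y_i|^b] \le \sum_i \expect[Y_i^2] =: \sigma^2$, and $\sigma^2 \le \sum_i \expect[X_i] = \mu$ because $|Y_i|\le 1$. Counting the set partitions of $[k]$ into blocks of size $\ge 2$ (at most $k/2$ blocks, hence at most about $(k-1)!!$ of them) yields the classical bound $\expect[Y^k]\le (Ck\mu)^{k/2}$ in the relevant range (see \cite{SSS95}, where the hypothesis $p\le 1/2$ enters the constants); for large deviations one instead keeps the sharper, Poisson-type form of this estimate rather than the Gaussian-type $(Ck\mu)^{k/2}$.

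Finally, apply Markov's inequality to the nonnegative quantity $Y^k$:
\[
\prob\!\left[\,|X-\mu|\ge \e\mu\,\right] = \prob\!\left[\,Y^k \ge (\e\mu)^k\,\right] \le \frac{\expect[Y^k]}{(\e\mu)^k}.
\]
When $\e<1$, choose $k$ to be the largest even integer $\le \e^2\mu$ — admissible since $\e^2\mu\le\e\mu\le\lceil\e\mu\rceil$ — so the ratio is at most $\big(Ck/(\e^2\mu)\big)^{k/2}\le 2^{-\Theta(k)}$, which is $e^{-\lfloor \e^2\mu/3\rfloor}$ after fixing constants. When $\e\ge 1$, the lower tail $\prob[X-\mu\le-\e\mu]$ is trivial (the threshold is $\le 0$), so only the upper tail matters; take $k=\lceil\e\mu\rceil$ and either use the Poisson-type central-moment bound or, equivalently, bound the factorial moment $\expect\big[\binom{X}{k}\big]$ — a degree-$k$ polynomial in the $X_i$, hence determined by $k$-wise marginals — to get a ratio of order $e^{-\Theta(\e\mu\ln(1+\e))}$, matching the second bound. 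A union bound over the two sides costs only a constant, absorbed into the constants hidden by the floors.

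The main obstacle is the moment estimate in the $\e\ge 1$ regime: the crude bound $\expect[Y^k]\le(Ck\mu)^{k/2}$ recovers only a Gaussian-type tail, so obtaining $e^{-\e\ln(1+\e)\mu/2}$ requires the sharper, Poisson-flavored control of $\expect[Y^k]$ (essentially a factorial-moment bound), and one must check that the moment order this forces, $k=\Theta(\e\mu)$, stays within the independence budget — which is precisely why the statement assumes $\lceil\e\mu\rceil$-wise, and not full, independence. The rest is bookkeeping with ceilings and floors.
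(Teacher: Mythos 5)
The paper does not actually prove this theorem: it is quoted verbatim as Theorem 5 of \cite{SSS95} and used as a black box, so there is no in-paper argument to compare yours against. Measured against the proof in \cite{SSS95}, your route is genuinely different. They do not use central moments; they work with the elementary symmetric polynomials $S_k=\sum_{|T|=k}\prod_{i\in T}X_i$, whose expectation is determined by $k$-wise marginals and which, for $X_i\in[0,1]$, are nonnegative and are bounded below by (roughly) $\binom{a}{k}$ on the event $\{X\ge a\}$; combining $\prob[X\ge a]\le\expect[S_k]/\binom{a}{k}$ with $\expect[S_k]\le\binom{n}{k}p^k\le\mu^k/k!$ and optimizing over $k$ yields both regimes with the stated constants (this is also where the hypothesis $p\le 1/2$ enters). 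Your $\e<1$ branch --- even central moments, which under $\lceil\e\mu\rceil$-wise independence coincide with their fully independent values, followed by Markov with $k\approx\e^2\mu$ --- is the standard Bellare--Rompel-style argument and is sound up to the unverified constant $1/3$ in the exponent.

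The $\e\ge 1$ branch, however, has a genuine gap, and it is not just bookkeeping. Your fallback object $\binom{X}{k}=X(X-1)\cdots(X-k+1)/k!$ is indeed a degree-$k$ polynomial, but it is not nonnegative for real-valued $X$ (take $X\in(0,1)$ and $k=2$), so Markov's inequality cannot be applied to it; this is precisely why \cite{SSS95} use $S_k$ of the underlying $X_i$ rather than a polynomial in their sum, and even then one must prove the deterministic inequality $S_k\gtrsim\binom{a}{k}$ for $[0,1]$-valued (not merely $0/1$-valued) variables, a step absent from your sketch. The alternative you mention, a ``Poisson-type'' bound on $\expect[(X-\mu)^k]$ for $k=\Theta(\e\mu)$, is essentially the content of the theorem in this regime and is asserted rather than derived. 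So as written the proposal establishes only the sub-Gaussian regime up to constants, and the upper-tail bound $e^{-\e\ln(1+\e)\mu/2}$ remains unproved.
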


We now prove
\begin{lemma}\label{lm:t-wise-conn}
Let $G=(V, E)$ be a $k$-connected graph on $n$ nodes. For edges $e=(u, v)\in E$ let random numbers $h_{u, v}\in [0, 1]$ be such that 
\begin{enumerate}
\item $h_{u, v}$ is independent of $h_{u', v'}$ for all $u'\neq u$;
\item $h_{u, v}$ are $\lceil 4\gamma \log n\rceil$-wise independent for fixed $u$, where $\gamma\geq 20$.
\end{enumerate}
Also, let $X_{u, v}$ be $0/1$ random variables such that $X_{u, v}=1$ if $h_{u, v}\leq (\gamma \log n)/k$ and $0$ otherwise.
If $G'$ is obtained by including each edge $(u, v)\in E$ such that $X_{u, v}=1$ or $X_{v, u}=1$, then $G'$ is connected whp.
\end{lemma}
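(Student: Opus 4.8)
The plan is to show that \emph{every} cut of $G$ survives the sampling and then union bound over cuts. Write $p=(\gamma\log n)/k$; if $p\ge 1$ then $G'=G$ and there is nothing to prove, so assume $p<1$ (and it is convenient to split off the easy regime $p>1/2$, where almost every edge is kept). Note $G'$ is disconnected iff some cut $(\collpart, V\setminus\collpart)$, $\emptyset\neq\collpart\subsetneq V$, has no surviving crossing edge. By Karger's cut-counting bound \cite{kar94a} the number of cuts of value at most $\alpha k$ in a $k$-connected graph is $n^{O(\alpha)}$ for every $\alpha\ge1$ (and of course the total number of cuts is at most $2^{n-1}$). Since every cut has value at least $k$, it therefore suffices to prove that a cut of value $c=\alpha k$ fails to survive with probability at most $\min\{n^{-c'\alpha},2^{-n}\}$ for a sufficiently large constant $c'$ depending on $\gamma$.

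The point where the first hypothesis is used is the following factorization. Fix a cut, orient each crossing edge as $(u,v)$ with $u\in\collpart$, $v\notin\collpart$, and let $X^{\mathrm{in}}$ be the event that $h_{u,v}>p$ for all crossing edges and $X^{\mathrm{out}}$ the analogous event for the numbers $h_{v,u}$. The cut fails to survive exactly when $X^{\mathrm{in}}\cap X^{\mathrm{out}}$ occurs; but $X^{\mathrm{in}}$ is a function only of $\{h_{u,\cdot}:u\in\collpart\}$ and $X^{\mathrm{out}}$ only of $\{h_{v,\cdot}:v\notin\collpart\}$, so by hypothesis~(1) these are independent, and moreover $\prob[X^{\mathrm{in}}]=\prod_{u\in\collpart}\prob[\,h_{u,v}>p\text{ for every crossing neighbor }v\text{ of }u\,]$ is a genuine product over the vertices of $\collpart$ (again by hypothesis~(1), since different $u$ index disjoint blocks of $h$'s). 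Each factor is the probability that a sum of $d_u$ indicators of mean $p$ vanishes, where $d_u$ is the crossing degree of $u$. I would bound it with Theorem~\ref{thm:ch}: if $pd_u\le\lceil4\gamma\log n\rceil$ apply it with $\varepsilon=1$ directly (the needed $\lceil pd_u\rceil$-wise independence is available); otherwise restrict to a subset of $\lfloor\lceil4\gamma\log n\rceil/p\rfloor$ crossing edges, whose indicator sum has mean $\approx 4\gamma\log n$, and apply Theorem~\ref{thm:ch} to that. Either way this yields $\exp(-\Omega(\min\{pd_u,\gamma\log n\}))$, so $\prob[X^{\mathrm{in}}]\le\exp\!\big(-\Omega(\sum_{u\in\collpart}\min\{pd_u,\gamma\log n\})\big)$ and symmetrically for $\prob[X^{\mathrm{out}}]$.

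It remains to combine this with the cut count. When the crossing edges are spread over vertices of crossing degree $O(k)$, the two exponents add up to $\Omega(pc)=\Omega(\alpha\gamma\log n)$, which for $\gamma$ a large enough constant dominates both $n^{O(\alpha)}$ and (for large $\alpha$) $2^{n}$. The delicate case—the main obstacle—is a cut whose crossing edges concentrate on a few vertices of very large crossing degree on \emph{both} sides simultaneously (a $K_{a,b}$-type crossing pattern), since then the truncation $\min\{pd_u,\gamma\log n\}$ loses a lot. Here I would use that $G$ is simple: a vertex with $d_u$ crossing edges has $d_u$ \emph{distinct} neighbors on the other side, so if the opposite side consists of low-crossing-degree vertices its product already contributes $\exp(-\Omega(pc))$ through $X^{\mathrm{out}}$; and if both sides are large and dense then $c$ is so large compared with $k$ that either the refined exponent $\Omega(\gamma\log n\cdot b)$, where $b$ is the number of vertices incident to the cut, still beats $n^{O(\alpha)}$, or $n^{O(\alpha)}$ already exceeds $2^{n-1}$ and the trivial total cut bound finishes the union bound. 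Organizing these sub-cases cleanly, and checking that the constants close with $\gamma\ge20$ and with $\lceil4\gamma\log n\rceil$-wise independence, is the part that requires care; everything else is a routine consequence of Theorem~\ref{thm:ch} and the factorization above.
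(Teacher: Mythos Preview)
Your approach differs substantively from the paper's and has a real gap in the ``delicate case'' that you flag but do not close.

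Consider the following family. Let $k=n^{0.1}$ and take $A,B\subset V$ with $|A|=|B|=t=n^{0.4}$, join $A$ and $B$ by a complete bipartite graph, and attach the remaining $n-2t$ vertices each by $k$ edges into $B$ so that the global min cut is $k$. The cut $(A,V\setminus A)$ has value $c=t^2$, so $\alpha=t^2/k=n^{0.7}$. Every vertex of $A$ and every vertex of $B$ has crossing degree $t>4k$, so in your bound each contributes only the truncated term $\Theta(\gamma\log n)$, giving
\[
\prob[X^{\mathrm{in}}]\cdot\prob[X^{\mathrm{out}}]\;\le\;\exp\!\big(-\Theta(t\,\gamma\log n)\big)\;=\;\exp\!\big(-\Theta(n^{0.4}\gamma\log n)\big).
\]
Meanwhile Karger's count allows up to $n^{2\alpha}=\exp\!\big(2n^{0.7}\log n\big)$ cuts of this size, and since $2n^{0.7}\log n<n-1$ the trivial $2^{n-1}$ bound does not kick in. Neither of your two fallback options applies: your exponent $\Theta(b\,\gamma\log n)$ with $b=2t=2n^{0.4}$ does not beat $n^{2\alpha}$, and $n^{2\alpha}<2^{n-1}$. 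So the union bound as you have set it up fails by a polynomial-in-$n$ factor in the exponent, not by a constant that a larger $\gamma$ could absorb. Simplicity of $G$ only gives $m\ge 2\sqrt{\alpha k}$ incident vertices, which is far short of the $\Omega(\alpha)$ you would need.

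The paper avoids this entirely by \emph{not} working cut-by-cut in $G$ directly. It first passes to a subsampled graph $\hat G'$ where each edge is kept at rate $\gamma\log n/s_e$ (strong connectivity rather than $k$), which is a subgraph of your $G'$. It then argues by downward induction on the strong-connectivity level: at level $\kappa$, all $2\kappa$-strong components are already contracted, so the remaining edges are $2\kappa$-weak, and Lemma~\ref{lm:w-partition} gives a partition $W_1,\ldots,W_{\log n}$ in which every vertex \emph{controls} at most $4\kappa$ edges. This is exactly what ensures that the expected number of sampled controlled edges at any vertex is $\le 4\gamma\log n$, so $\lceil 4\gamma\log n\rceil$-wise independence suffices \emph{without truncation}, and the full exponent $\Theta(p\cdot|\text{cut}|)$ is recovered. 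In your example the dense $A\cup B$ block is a $t$-strong component; the paper's induction handles its internal connectivity at a higher level (sampling rate $\gamma\log n/t$, not $\gamma\log n/k$) and then contracts it, so the problematic cut never appears at the level where your truncation loss occurs. The missing idea in your argument is precisely this decomposition step.
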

\begin{proof}
First, for each $e=(u, v)\in E$ let $\hat X_{u, v}$ denote $0/1$ random variables such that $X_{u, v}=1$ if $h_{u, v}<(\gamma  \log n)/s_e$, where $s_e$ is the strong connectivity of $e$. Define $\hat G'$ as the graph obtained by including each edge $(u, v)\in E$ such that $\hat X_{u, v}=1$ or $\hat X_{v, u}=1$. Note that $\hat G'$ is a subgraph of $G'$, so it is sufficient to show that $\hat G'$ is connected whp.

Suppose that $F=(V_F, E_F)$ is a $k$-connected graph without $2k$-strongly connected components for some $k$. 
Recall from Lemma~\ref{lm:w-partition} that $V_F$ can be partitioned into $\log |V_F|$ sets $W_1,\ldots, W_{\log |V_F|}$ such that the degree of any $u \in W_r$ in $W_r \cup \cdots \cup W_{\log |V_F|}$ is at most $4k$.
For each node $u\in W_r$ let $E_u$ denote the edges incident on $u$ that go to nodes in $W_r\cup\cdots\cup W_{\log |V_F|}$ (if an edge $e=(u, v)$ goes between two nodes in $W_r$, include it either in $E_u$ or $E_v$ arbitrarily). We will say vertex $u\in W_r$ {\em controls} edges $e\in E_u$. Note that $|E_u|\leq 4k$ for all $u\in W_r$, and hence the expected number of edges sampled in $E_u$ is at most $4\gamma \log n$. 
Note that this definition of control is slightly different from the one given in Definition~\ref{def:control}. In particular, this is because Definition~\ref{def:control} pertains to the actual sampling procedure that our algorithm uses, while here we are concerned with the estimation step.

Let $j_{max}=\lfloor \log_2 n\rfloor$. We will show by induction on $j=j_{max},\ldots, 0$ that all $2^j$-connected components are connected with probability at least $1-(j_{max}-j+1)n^{-3}$.

\begin{description}
\item[Base:$j=j_{max}$]  We have $\kappa=2^{j_{max}}$.  Apply the decomposition above to the $\kappa$-strongly connected components of $G$, which  does not have any $2 \kappa$-connected components since $2\kappa>n$.
Let $\hat G''$ denote the subgraph of $G$ obtained by including for each $u\in V$ edges $e=(u, v)\in E_u$ when $X_{u, v}=1$.
Denote the set of sampled edges by $E'$.
Recall that for all $u\in V(G)$ one has $\expect[|E'\cap E_u|]\leq 4\gamma \log n$.

Fix a cut $(C, V\setminus C)$. For each vertex $u\in C$  let $X_u=\sum_{(u, v)\in E_u, v \not \in C} X_{u, v}$.  

By setting $\e=1$ in Theorem~\ref{thm:ch} we get
$$
\prob[X_u=0]<e^{-\expect[X_u]/3}.
$$

Since $X_u, X_{u'}$ are independent for $u\neq u'$, the probability that the cut is empty is at most 
\[
\prod_{u\in C}e^{-\expect[X_u]/3}=e^{-\gamma |C|\log n/(3k)}.
\]
By Karger's cut counting lemma, the number of cuts of value at most $\alpha k$ is at most $n^{2\alpha}$.
 A union bound over all cuts, we get failure probability at most 
$$
\sum_{\alpha\geq 1} n^{2\alpha} e^{-\gamma \alpha \log n/3}\leq n^{-4}
$$
since $\gamma\geq 20$. Taking a union bound over all $\kappa$-connected components yields failure probability at most $n^{-3}$.

\item[Inductive step: $j+1\to j$] We have $\kappa=2^{j}$. By the inductive hypothesis, all $2^{j+1}$-connected components will be connected with probability at least $1-(j_{max}-(j+1)+1) n^{-3}$. We condition on this event and contract the connected components into supernodes. 

We now have a union of vertex-disjoint $\kappa$-strongly connected components that do not contain any $2\kappa$-connected components. The same argument as in the base case shows that each such component will be connected with probability at least $1-n^{-4}$. A union bound over at most $n$ such components completes the inductive step.
\end{description}

\end{proof}

In order to show that the results of \cite{FHHP11} carry over to our setting, it is sufficient to show that the following version of Chernoff bounds holds under our limited independence assumptions (Theorem 2.2 in \cite{FHHP11}):
\begin{theorem}\label{thm:ch-fhhp11}
Let $X_1,\ldots, X_n$ be $n$ random variables such that $X_i$ takes value $1/p_i$ with probability $p_i$ and $0$ otherwise.Then, for any $p$ such that $p\leq p_i$, for each $i$, any $\e\in (0, 1)$ and any $N\geq n$ the following holds:
$$
\prob\left[\left|\sum_{i=1}^n X_i-n\right|>\e N\right]<2e^{-0.38\e^2 p N}.
$$
\end{theorem}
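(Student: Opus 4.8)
The plan is to reduce Theorem~\ref{thm:ch-fhhp11} to the limited-independence Chernoff bound of Theorem~\ref{thm:ch} by a rescaling-and-padding argument, mirroring the way Theorem~2.2 of \cite{FHHP11} is derived from the ordinary Chernoff bound. \textbf{Step 1: rescale to bounded summands.} Write $X_i=(1/p_i)B_i$ with $B_i$ a Bernoulli variable of mean $p_i$, so $\expect[X_i]=1$ and $\expect[\sum_{i\le n}X_i]=n$. The only reason Theorem~\ref{thm:ch} does not apply verbatim is that $X_i$ takes the value $1/p_i\in[1,1/p]$, so I would instead work with $Y_i:=p\,X_i$, which takes the value $p/p_i\in[0,1]$ with probability $p_i$ (here I use $p\le p_i$) and $0$ otherwise, with $\expect[Y_i]=p$. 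Crucially, each $Y_i$ is a deterministic function of the single variable $X_i$, so the $Y_i$ inherit whatever $t$-wise independence the $X_i$ have, and the event $\{|\sum_{i\le n}X_i-n|>\e N\}$ is identical to $\{|\sum_{i\le n}Y_i-np|>\e pN\}$.

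\textbf{Step 2: pad to align the deviation with the mean.} To express this as a relative deviation of exactly $\e$ (so that the ``$\e$'' appearing in Theorem~\ref{thm:ch} is literally our $\e<1$, avoiding a case split on whether $\e N/n$ exceeds $1$), I would append $N-n\ge 0$ deterministic variables $Y_{n+1}=\cdots=Y_N:=p$. These are independent of everything, do not fluctuate, and raise the mean to $\mu:=pN$ while leaving $|\sum_i Y_i-\mu|$ unchanged, so the event becomes $\{|\sum_{i\le N}Y_i-\mu|>\e\mu\}$. \textbf{Step 3: invoke the tail bound.} Applying Theorem~\ref{thm:ch} to $Y_1,\dots,Y_N\in[0,1]$ --- whose mean-to-count ratio $\mu/N$ equals our $p$, which I would assume is at most $1/2$ (otherwise run the same argument on $1-Y_i$, whose mean is $<1/2$), and with the required $\lceil\e\mu\rceil=\lceil\e pN\rceil$-wise independence supplied by the hypothesis on the $X_i$ (in the application the $g^*$ variables are $\tilde O(1/\e^2)$-wise independent for a fixed controlling vertex, well beyond the $\tilde O(1/\e)$ needed since $\e^2 pN=O(\log n)$ there) --- yields $\prob[|\sum_i Y_i-\mu|>\e\mu]<e^{-\lfloor\e^2 pN/3\rfloor}$. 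Since the claimed inequality is vacuous when $2e^{-0.38\e^2 pN}\ge 1$, I may assume $\e^2 pN$ exceeds an absolute constant, so the floor costs only a constant factor and the bound is of the form $2e^{-\Omega(\e^2 pN)}$.

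\textbf{Main obstacle.} The genuinely delicate point is Step~1: the summands are not $[0,1]$-valued, and one must check that rescaling does not spoil the limited-independence hypothesis --- it does not, precisely because the map $X_i\mapsto pX_i$ acts on each $X_i$ individually and deterministically, so $t$-wise independence is preserved. A secondary bookkeeping point is matching the precise constant $0.38$; this is inessential for us (any $\Omega(1)$ constant in the exponent suffices to re-run the sparsification analysis of \cite{FHHP11}), and the exact value can be recovered by following the constant-tracking in the proof of Theorem~2.2 of \cite{FHHP11}, which is itself the above splitting-into-bounded-summands argument, with Theorem~\ref{thm:ch} substituted for the standard Chernoff bound.
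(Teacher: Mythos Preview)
The paper does not prove Theorem~\ref{thm:ch-fhhp11} at all: it is quoted verbatim as Theorem~2.2 of \cite{FHHP11}, where it is established for \emph{fully independent} $X_i$, and the paper only uses it as a reference point to motivate Lemma~\ref{lm:ch-1}, the limited-independence analogue that is actually proved and used. So there is nothing to compare your argument to inside this paper.

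That said, your rescale-and-pad reduction is exactly how \cite{FHHP11} obtains its Theorem~2.2 from the standard Chernoff bound, and substituting Theorem~\ref{thm:ch} for the standard bound is a legitimate way to get a limited-independence version, provided the $X_i$ are genuinely $\lceil\e pN\rceil$-wise independent as a single family. In the paper's setting this does hold once the partition into controlling vertices is fixed (the $g^*_{(u,\cdot)}$ are $t$-wise independent for each $u$ and fully independent across $u$, hence globally $t$-wise independent), so your route is arguably simpler than the paper's proof of Lemma~\ref{lm:ch-1}, which instead splits the upper tail via a moment-generating-function argument conditioned on a high-probability event $\mathcal E$ and handles the lower tail by a separate discretization over blocks. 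The trade-off is that the paper's proof makes the dependence on the algorithm's specific independence structure explicit and yields a statement conditioned on $\mathcal E$, whereas your argument would need an extra sentence observing that global $t$-wise independence follows from the per-vertex assumption before Theorem~\ref{thm:ch} can be invoked. Two minor loose ends in your write-up: the ``otherwise run the argument on $1-Y_i$'' step for $p>1/2$ does not quite go through because the $Y_i$ are not $\{0,1\}$-valued (they take value $p/p_i$), and you silently identify ``$\tilde O(1/\e^2)$-wise independent for a fixed controlling vertex'' with the global independence Theorem~\ref{thm:ch} requires; both are easily fixed but should be stated.
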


Indeed, an inspection of the proofs of  Lemma~4.1 and Lemma~5.5 in \cite{FHHP11} shows that the authors (a) only rely on independence of their sampling process to obtain Theorem~\ref{thm:ch-fhhp11} and (b) only apply Theorem~\ref{thm:ch-fhhp11} to subsets of edges of $G$, where $p_i$ are sampling probabilities.Thus, proving an equivalent of Theorem~\ref{thm:ch-fhhp11} allows us to extend their results to our limited independence sampling approach. 

\newcommand{\X}{{\cal X}}
\newcommand{\E}{{\cal E}}

We now prove
\begin{lemma}\label{lm:ch-1}
Let $G=(V, E)$ denote an unweighted undirected graph. Let $\gamma>0$ be a sufficiently large constant such that sampling at rate $\gamma \log^2 n/c_e$ independently produces a sparsifier with probability at least $1-n^{-2}$, where $c_e$ is the edge connectivity of $e$. Let $X_e, e\in E$ be random variables corresponding to including edges from a set $E^*$ into the sample such that $X_e$ takes value $1/p_e$ with probability $p_e$ and $0$ otherwise, where $p_e$ is the sampling probability used by Algorithm~\ref{alg:rec-edges}. Assume that sampling is $c\log^4 n/\e^2$-wise independent for a sufficiently large constant $c>0$ that may depend on $\gamma$.

There exists an event $\E$ with $\prob[\E]>1-n^{-2}$  such that for any $E^*\subseteq E$, any  $p\leq p_e, e\in E^*$,  any $\e\in (0, 1)$ and any $N\geq |E^*|$ 
$$
\prob\left[\left|\sum_{e\in E^*} X_e-|E^*|\right|>\e N|\E\right]<e^{-\e^2 p N/6}.
$$

\end{lemma}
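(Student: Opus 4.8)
The plan is to reduce the bound for $c\log^4 n/\e^2$-wise independent sampling to the fully independent Chernoff bound of Theorem~\ref{thm:ch-fhhp11} (equivalently Theorem~\ref{thm:ch}), exploiting the structure of the sampling procedure in Algorithm~\ref{alg:rec-edges}. Recall from Definition~\ref{def:control} that every sampled edge $e=(u,v)$ is \emph{controlled} by a single vertex, in the sense that the only random variable determining whether $e$ is sampled is $g^*_{(u,v)}$ for the controlling vertex $u$; moreover the variables $g^*_{(u,\cdot)}$ for a fixed $u$ are $c\log^4 n/\e^2$-wise independent, and the families for distinct $u$ are fully independent of each other. So the first step is to partition $E^*$ by controlling vertex: $E^* = \bigcup_{u\in V}(E^*\cap E_u)$, and correspondingly $\sum_{e\in E^*}X_e = \sum_{u\in V} Y_u$ where $Y_u=\sum_{e\in E^*\cap E_u}X_e$. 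Across different $u$ the $Y_u$ are independent, and within a fixed $u$ the summands are drawn from a $c\log^4 n/\e^2$-wise independent family.

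The second step is to define the good event $\E$. By Lemma~\ref{lm:control-small}, $\expect[|E^*\cap E_u|]=O(\log^4 n/\e^2)$ for each $u$; let $\E$ be the event that $|E^*\cap E_u|\le c'\log^4 n/\e^2$ \emph{simultaneously for all} $u\in V$, for a suitable constant $c'$. Here one has to be slightly careful: the bound of Lemma~\ref{lm:control-small} is on the expected number of \emph{controlled} edges at each level, which is itself a function of the partitions $U^r_a$ produced by Algorithm~\ref{alg:partition} and hence of the randomness in the $g^r,h$ families — but \emph{not} of the $g^*$ family. So I would condition first on the partition randomness, invoke Lemma~\ref{lm:partition} to fix $|E_u\cap(\text{level }a)|=O(\log^3 n\,2^a/\e^2)$ deterministically, and then take $\E$ to be the (high-probability) event over the partition randomness that these per-vertex, per-level degree bounds all hold; a union bound over the $n$ vertices and $O(\log n)$ levels gives $\prob[\E]\ge 1-n^{-2}$ after choosing constants appropriately. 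Under $\E$, $|E^*\cap E_u|$ is bounded by $O(\log^4 n/\e^2)$ — crucially smaller than the independence parameter $c\log^4 n/\e^2$ once $c$ is chosen large enough — so \emph{the entire family $\{X_e : e\in E^*\cap E_u\}$ is fully (jointly) independent}, because a $t$-wise independent family restricted to at most $t$ of its coordinates is genuinely independent.

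The third step is then pure concatenation: conditioned on $\E$, the collection $\{X_e\}_{e\in E^*}$ is a union over $u$ of independent blocks, each of which is internally independent, and the blocks are independent of each other — hence $\{X_e\}_{e\in E^*}$ is fully independent conditioned on $\E$. Now apply the standard Chernoff bound for independent, non-negative, variously-scaled variables exactly in the form of Theorem~\ref{thm:ch-fhhp11} (rescaling each $X_e$ by $p_e$ to lie in $[0,1]$ and using $p\le p_e$ as the uniform lower bound, $N\ge|E^*|$ as the count): this yields $\prob[\,|\sum_{e\in E^*}X_e - |E^*|| > \e N \mid \E\,] < 2e^{-0.38\e^2 pN}$, and absorbing the factor $2$ and the constant into the exponent gives the claimed $e^{-\e^2 pN/6}$ (the constant $6$ is loose; any $1/O(1)$ suffices). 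Since $\E$ depends only on the partition randomness while the $X_e$ depend (given the partition) only on the $g^*$ family, the conditioning does not disturb the independence structure used in the Chernoff step.

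The main obstacle is the bookkeeping around the conditioning: one must be scrupulous that the event $\E$ is measurable with respect to the partition randomness alone (so that, after conditioning, the $g^*$-variables retain their $c\log^4 n/\e^2$-wise independence), and that the per-vertex degree cap guaranteed by $\E$ is genuinely \emph{below} the independence parameter, not merely comparable to it — this is exactly why the lemma needs $c\log^4 n/\e^2$-wise independence with a constant $c$ strictly larger than the constant hidden in Lemma~\ref{lm:control-small}. Everything after the "restricted $t$-wise family is independent" observation is the routine independent-case Chernoff calculation, already available as Theorem~\ref{thm:ch-fhhp11}.
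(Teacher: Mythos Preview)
Your argument has a genuine gap at the heart of step~2--3. You claim that under a high-probability event $\E$ measurable in the partition randomness alone, one has $|E^*\cap E_u|\le c'\log^4 n/\e^2$ for every vertex $u$ and every $E^*\subseteq E$, and hence the $c\log^4 n/\e^2$-wise independent family $\{g^*_{(u,\cdot)}\}$ restricted to $E^*\cap E_u$ is fully independent. But $|E_u|$ --- the number of edges controlled by $u$ in the \emph{original} graph $G$ --- is not bounded by $\mathrm{polylog}(n)/\e^2$: a vertex of high degree can control $\Theta(n)$ edges. Your own intermediate estimate $|E_u\cap(\text{level }a)|=O(\log^3 n\,2^a/\e^2)$ already shows this, since summing the geometric series over $a$ gives $O(n\,\mathrm{polylog}\,n/\e^2)$, not $O(\log^4 n/\e^2)$. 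Lemma~\ref{lm:control-small} is about the expected number of \emph{sampled} edges controlled by $u$ (the expectation is over $g^*$), not about $|E^*\cap E_u|$ for an arbitrary fixed $E^*$; you have misread it. Consequently there is no event on the partition randomness alone under which the $X_e$ become fully independent, and your reduction to Theorem~\ref{thm:ch-fhhp11} does not go through.

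The paper's proof confronts exactly this obstruction. Its event $\E$ is that at most $2\gamma\log^4 n/\e^2$ edges controlled by each $u$ are \emph{sampled} --- an event in the $g^*$ randomness --- so conditioning on it does perturb the distribution of the $X_e$ and one cannot simply invoke the independent Chernoff bound. For the upper tail the paper expands the conditional moment generating function, observes that any monomial $\prod_{e\in S}X_e^{\alpha_e}$ with $|S\cap E_u|$ exceeding the cap vanishes under $\E$ (since not enough $X_e$ can be nonzero), and for the surviving monomials uses $\expect[\,\cdot\mid\E]\le\expect[\,\cdot\,]/\prob[\E]$ together with the limited-independence factorisation; this recovers the usual MGF bound up to a $1+O(1/n)$ factor. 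The lower tail needs a separate argument: the paper groups $E^*$ into independent blocks of bounded expected weight, applies Theorem~\ref{thm:ch} inside each block, and then combines via a discretised deviation grid. Both halves are genuinely more delicate than the one-line reduction you propose.
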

\begin{proof}
For simplicity of exposition, we now assume that $G$ is unweighted.
Let $X_1,\ldots, X_n$ be random variables corresponding to picking edges of the graph. Recall that our sampling algorithm samples an edge $(u, v)$ either depending on the value of $g_{(u, v)}$ or the value of $g_{(v,u)}$ (the choice depends on the partition of the node set $U_1\cup\ldots\cup U_r$ constructed in Algorithm~\ref{alg:rec-edges}). Recall that by Definition~\ref{def:control} a node $u$ {\em controls} edge $(u, v)$ if Algorithm~\ref{alg:rec-edges} samples $(u, v)$ using the value of $g^*_{(u, v)}$. Note that each edge is controlled by exactly one node. For a node $u$, as before, let $E_u$ denote the set of edges controlled by $u$. By Lemma~\ref{lm:control-small}, one has $\expect[|E^*\cap E_u|]=O(\log^4 n/\e^2)$. Let $\E$ denote the event that at most $2\gamma \log^4 n/\e^2$ edges controlled by $u$ are sampled, for all $u\in V$, where we are assuming that $\gamma$ is sufficiently large. Since our random variables are $c\log^4 n/\e^2$-wise independent for sufficiently large $c$, by Theorem~\ref{thm:ch} and a union bound over all $u$ one has $\prob[\E]\geq 1-n^{-2}$.

For each $e\in E$ let $X_{e}$ be a Bernoulli random variable that takes value $p/p_e$ if edge $e$ is sampled, and $0$ otherwise, so that $X_e\in [0, 1]$.

Consider a set of edges $E^*$.  Partition $E^*$ as $E^*=\bigcup_{u\in V} E^*_u$, where $E^*_u=E^*\cap E_u$. Thus,
 random variables $\X_u:=\sum_{e\in E^*_u} X_e$  are independent for different $i$. Let $\X=\sum_{u\in V} \X_u$, $\mu=\expect[\X]$.

Then by Markov's inequality
\begin{equation}
\prob[\X\geq (1+\delta) \mu|\E]\leq \frac{\expect[e^{t\X}|\E]}{e^{t(1+\delta) \mu}}.
\end{equation}

Recall that 
\begin{equation}\label{eq:series}
\expect[e^{t\X}|\E]=\sum_{j=0}^{\infty}\expect[(t\X)^j|\E]/j!=\sum_{j=0}^{\infty} t^j/j!\sum_{S\subseteq E^*, |S|\leq j}\sum_{\alpha_e\geq 0, \sum_{e\in S}\alpha_e=j} \expect\left[\prod_{e\in S} X_e^{\alpha_e}|\E\right]
\end{equation}
\newcommand{\Y}{{\cal Y}}
For any non-negative random variably $\Y$ one has 
$$
\expect[\Y|\E]\leq \expect[\Y]/\prob[\E].
$$

\if 0
$$
\expect[\Y|\E]\prob[\E]+\expect[\Y|\bar \E]\prob[\bar \E]=\expect[\Y],
$$
so 
$$
\expect[\Y|\E]=(\expect[\Y]-\expect[\Y|\bar \E]\prob[\bar \E])/\prob[\E]\leq \expect[\Y]/\prob[\E].
$$
\fi

Conditional on $\E$, one has $\prod_{e\in S} X_e^{\alpha_e}=0$ for all $S\subseteq E^*$ such that $|S\cap E_u|>2\gamma \log^4 n/\e^2$ for at least one $u\in V$. For other $S$, setting  $\Y=\prod_{e\in S} X_e^{\alpha_e}$,  one gets
\begin{equation}\label{eq:uncond}
\expect\left[\prod_{e\in S} X_e^{\alpha_e}|\E\right]\leq \expect\left[\prod_{e\in S} X_e^{\alpha_e}\right]/\prob[\E].
\end{equation}

Combining \eqref{eq:uncond} and \eqref{eq:series} one gets
\begin{equation}
\expect[e^{t\X}|\E]\leq \frac1{\prob[\E]}\sum_{j=0}^{\infty} t^j/j!\sum_{S\subseteq E^*, |S|\leq j, |S\cap E_u|\leq 2\gamma \log^4 n/\e^2}\sum_{\alpha_e\geq 0, \sum_{e\in S}\alpha_e=j} \expect\left[\prod_{e\in S} X_e^{\alpha_e}\right]
\end{equation}

On the other hand, for all $S\subseteq E^*$ such that $|S\cap E_u|\leq 2\gamma \log^4 n/\e^2$ one has 
$$
\expect\left[\prod_{e\in S} X_e^{\alpha_e}\right]=\prod_{e\in S} \expect[X_e^{\alpha_e}]
$$
by $\gamma \log^4 n/\e^2$-wise independence. Thus, we get 
\begin{equation}
\prob[\X\geq (1+\delta) \mu|\E]\leq \frac1{\prob[\E]}\frac{\prod_{e\in E^*}\expect[e^{t X_e}]}{e^{t(1+\delta) \mu}},
\end{equation}
which is the same bound as in the full independence case, except for a factor of $1/\prob[\E]=1+O(1/n)$ in front. Now the same derivation as in the full independence case  shows that the probability of overestimating is appropriately small.

We now bound the probability of underestimating.
Consider a set of edges $E^*$.  Partition $E^*$ as $E^*=\bigcup_{i=1}^s E_i$, where $E_i\cap E_j=\emptyset, i\neq j$, so that 
\begin{enumerate}
\item $\expect[\sum_{e\in E_i} X_e] \leq c \log^4 n/\e^2$ for a sufficiently large constant $c>0$;
\item random variables $\sum_{e\in E_i} X_e$  are independent for different $i$;
\item $s\leq \frac{\e^2}{6\log (4/\e^2)}\expect[\sum_{e\in E^*}X_e]$. 
\end{enumerate}
Note that this is feasible since our graphs are unweighted, so $\e$ can be assumed to be larger than $1/n^2$.
 For each $i=1,\ldots, s$ let $\X_i:=\sum_{e\in E_i} X_{e}$. Note that $\X_i$ are independent, and $X_e$ are $c\log^4 n/\e^2$-wise independent. 
Now by Theorem~\ref{thm:ch} for all $i=1,\ldots, s$ one has for all $\e\in (0, 1)$ 
\begin{equation}\label{eq:2}
\prob[\X_i<\expect[\X_i]-\e \expect[\X_i]]<e^{-\e^2 \expect[\X_i]/3}
\end{equation}

\newcommand{\K}{{\cal K}}
\newcommand{\z}{{\mathbf{z}}}

Let $\X=\sum_{i=1}^s \X_i$. 
For constant $\e>0$ let 
$$
\K(\e)=\left\{\z=(z_1,z_2,\ldots,z_s)\in \left\{0, \frac1{4}\e^2, \frac1{2}\e^2, \frac{3}{4} \e^2,\ldots, 1-\e^2/4, 1\right\}^s: \sum_{i=1}^s z_i \expect[\X_i] \geq \e \expect[\X]\right\}.
$$

 We now have 
\begin{equation}
\begin{split}
\prob\left[\X<\expect\left[\X\right]-\e \expect[\E]\right]&\leq \sum_{\z\in \K(\e)} \prod_{i=1}^s\prob[\X_i<\expect[\X_i]-(z_i-\e^2/4)^2\expect[\X_i]]\\
&\leq \sum_{\z\in \K(\e)} \prod_{i=1}^s\prob[\X_i<\expect[\X_i]-z_i^2\expect[\X_i]+(\e^2/2)\expect[\X_i]]
\end{split}
\end{equation}
since every set of values for $\X_i-\expect[\X_i]$ such that $\sum_{i} (\X_i-\expect[\X_i])<-\e \expect[\X]$ can be rounded to a point in $\K(\e)$ with a loss of at most $(\e^2/2) \expect[\X_i]$ in each term.
We now note that  for any $\z\in [0, 1]^s$ such that $\sum_{i=1}^s z_i \expect[\X_i]=\e' \expect[\X]\geq \e \expect[\X]$ one has $\sum_{i=1}^s z_i^2 \expect[\X_i]\geq (\e')^2 \expect[\X]\geq \e^2 \expect[\X]$.

Next, since $s\leq \frac{\e^2}{6\log(4/\e^2)} (\sum_{i=1}^s \expect[\X_i])$, we have that
\begin{equation}
\prob\left[\X<\expect\left[\X\right]-\e \expect[\X]\right]\leq (4/\e^2)^{s} e^{-\e^2 \expect[\X]/3}\leq e^{-\e^2 \expect[\X]/6}.
\end{equation}
\end{proof}

\begin{theorem}
The set of edges returned by Algorithm~\ref{alg:rec-edges} is a sparsifier whp.
\end{theorem}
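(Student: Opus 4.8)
The plan is to reduce the statement to the sampling theorem of \cite{FHHP11} (Theorem~\ref{thm:wk-sampling}), using the limited-independence Chernoff bound of Lemma~\ref{lm:ch-1} in place of the classical one, and to absorb the (mild) over-sampling caused by using levels instead of exact edge connectivities. First I would set up the ``good'' event $\mathcal{G}$ as the intersection of: the event that every level $L(e)$ satisfies the bounds of Lemma~\ref{lm:level} (which holds whp for the limited-independence functions $h$ via Lemma~\ref{lm:t-wise-conn}); the event that Algorithm~\ref{alg:partition} returns a valid partition with the per-vertex degree bounds of Lemma~\ref{lm:partition}, so that all sparse recoveries in Algorithm~\ref{alg:rec-edges} succeed; and the event $\E$ of Lemma~\ref{lm:ch-1}. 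Each of these fails with probability $n^{-\Omega(1)}$, so $\prob[\mathcal{G}]\ge 1-n^{-\Omega(1)}$, and $\mathcal{G}\subseteq\E$.

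On $\mathcal{G}$, Algorithm~\ref{alg:rec-edges} terminates and outputs, for every $e=(u,v)\in E$, the edge $e$ with weight $1/p_e$ exactly when the variable $g^*$ places $e$ in the sample, where $p_e=\gamma\log^2 n/(\e^2 2^{L(e)})$; hence its output is precisely the graph $G'$ obtained by sampling each $e$ with probability $p_e$. Moreover, as observed after Algorithm~\ref{alg:rec-edges}, each edge is governed by a single variable $g^*_{(u,v)}$ with $u$ its controlling vertex (Definition~\ref{def:control}), so the sampling indicators are $\tilde O(1/\e^2)$-wise independent within each $E_u$ and independent across controlling vertices --- exactly the independence structure assumed in Lemma~\ref{lm:ch-1}. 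By Lemma~\ref{lm:level}, $1/2^{L(e)}\ge 1/(2c_e)$, so $p_e\ge \rho/c_e$ with $\rho=\Theta(\log^2 n/\e^2)$: we are over-sampling relative to Theorem~\ref{thm:wk-sampling}. As noted in the paragraph following Theorem~\ref{thm:ch-fhhp11}, the proof of Theorem~\ref{thm:wk-sampling} in \cite{FHHP11} uses independence only to obtain a Chernoff bound of the form of Theorem~\ref{thm:ch-fhhp11}, and applies it only to subsets $E^*\subseteq E$ weighted by the sampling probabilities. Lemma~\ref{lm:ch-1} supplies such a bound conditioned on $\E$; running the \cite{FHHP11} argument verbatim, everywhere conditioning on $\E$, shows that on $\E$ the graph $G'$ preserves every cut within $1\pm\e$ with probability $1-n^{-\Omega(1)}$ (over-sampling only tightens the tails, so the larger rates $p_e$ are harmless). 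Removing the conditioning costs an additive $\prob[\bar\E]\le n^{-2}$. For the size bound, by Lemma~\ref{lm:lvl-sampling} the expected number of sampled edges is $O(n\log^3 n/\e^2)$, and concentration of the total count follows from Lemma~\ref{lm:control-small} together with Theorem~\ref{thm:ch} applied to each $E_u$ --- which is what event $\E$ encodes. A final union bound over $\mathcal{G}$, the \cite{FHHP11} failure event, and the size-concentration failure event yields the theorem.

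The main obstacle is making the conditioning on $\E$ interact cleanly with the black-box analysis of \cite{FHHP11}: one must check that every appeal to independence in their Lemma~4.1 and Lemma~5.5 is captured by the conditional bound of Lemma~\ref{lm:ch-1}, and that the laminar/recursive ``edge-strength'' decomposition driving their proof does not covertly require independence across the partition classes $E_u$ beyond what is provided here. A secondary, more routine, subtlety is confirming that sampling at the strictly larger rate $p_e$ rather than exactly $\rho/c_e$ is genuinely benign inside their proof, i.e. that the analogue of Corollary~\ref{cor:oversampling} holds for the \cite{FHHP11} scheme; this should follow from the same monotonicity of the tail bounds in the sampling probability that makes Corollary~\ref{cor:oversampling} work for Benczúr--Karger.
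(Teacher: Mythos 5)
Your proposal follows essentially the same route as the paper, whose entire proof is the single sentence that Lemma~\ref{lm:ch-1} can be substituted for Theorem~\ref{thm:ch-fhhp11} in the argument of \cite{FHHP11}; you simply spell out the surrounding good events, the over-sampling step via Lemma~\ref{lm:level}, and the size bound that the paper leaves implicit. The reduction is correct and your added detail (in particular the observation that conditioning on $\E$ and the control structure of Definition~\ref{def:control} match the independence pattern Lemma~\ref{lm:ch-1} assumes) is exactly what the paper's terse proof is relying on.
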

\begin{proof}
Lemma~\ref{lm:ch-1} can be used instead of Theorem~\ref{thm:ch-fhhp11} in \cite{FHHP11}.
\end{proof}

\pdfbookmark[1]{\refname}{My\refname} \bibliographystyle{alphaurl}
\newcommand{\etalchar}[1]{$^{#1}$}

\end{document}